\documentclass[11pt, a4paper]{article}

\usepackage{amsmath, amsthm, amssymb, url, bbm}
\usepackage[margin=2cm]{geometry}

\usepackage{tikz}
\usetikzlibrary{automata}

\newcommand{\Tran}{\mathrm{Tran}}
\newcommand{\Sing}{\mathrm{Sing}}

\newcommand{\Sym}{\mathrm{Sym}}
\newcommand{\GF}{\mathrm{GF}}

\newcommand{\rk}{\mathrm{rk}}
\newcommand{\id}{\mathrm{id}}

\newcommand{\pr}{\mathrm{pr}}

\theoremstyle{plain}
\newtheorem{conjecture}{Conjecture}
\newtheorem{corollary}{Corollary}
\newtheorem{lemma}{Lemma}

\newtheorem{theorem}{Theorem}
\newtheorem*{claim*}{Claim}

\theoremstyle{definition}
\newtheorem{definition}{Definition}

\newtheorem{example}{Example}
\newtheorem{remark}{Remark}

\newcommand{\espace}{,}


\begin{document}

\title{Complete Simulation of Automata Networks}
\author{
Florian Bridoux\footnote{Aix-Marseille Univ., Toulon Univ., CNRS, LIS, Marseille, France},
Alonso Castillo-Ramirez\footnote{Departamento de Matem\'aticas, Centro Universitario de Ciencias Exactas e Ingenier\'ias, Universidad de Guadalajara, M\'exico. E-mail: \texttt{alonso.castillor@academicos.udg.mx}},
Maximilien Gadouleau\footnote{Department of Computer Science, Durham University, South Road, Durham, DH1 3LE, UK}
}
\maketitle

\begin{abstract}
Consider a finite set $A$ and an integer $n \geq 1$. This paper studies the concept of complete simulation in the context of semigroups of transformations of $A^n$, also known as finite state-homogeneous automata networks. For $m \geq n$, a transformation of $A^m$ is \emph{$n$-complete of size $m$} if it may simulate every transformation of $A^n$ by updating one coordinate (or register) at a time. Using tools from memoryless computation, it is established that there is no $n$-complete transformation of size $n$, but there is such a transformation of size $n+1$. By studying the the time of simulation of various $n$-complete transformations, it is conjectured that the maximal time of simulation of any $n$-complete transformation is at least $2n$. A transformation of $A^m$ is \emph{sequentially $n$-complete of size $m$} if it may sequentially simulate every finite sequence of transformations of $A^n$; in this case, minimal examples and bounds for the size and time of simulation are determined. It is also shown that there is no $n$-complete transformation that updates all the registers in parallel, but that there exists a sequentally $n$-complete transformation that updates all but one register in parallel. This illustrates the strengths and weaknesses of parallel models of computation, such as cellular automata.
\end{abstract}

\section{Introduction}

Memoryless computation (MC) is a modern paradigm for computing any transformation of $A^n$, with $A$ a finite set and $n \geq 2$, by updating one coordinate at a time while using no memory. Its basic idea was developed in \cite{Bu96,Bu04,BGT09,BGT14,BM00,BM04a,BM04b}, and expanded in \cite{CFG14b,CFG14a,GR15}. The seminal example of MC is the famous XOR swap algorithm, which is analogous to the butterfly network, the canonical example of network coding (see \cite{ACLY00}). In the following paragraphs, we shall introduce notation and review the main definitions of MC.

Let $q$ be the cardinality of $A$. Without loss, we usually regard $A$ as the ring $\mathbb{Z}_{q}= \mathbb{Z}/ q \mathbb{Z}$ or, when $q$ is a prime power, the field $\GF(q)$. Since the case when $q = 1$ is trivial, we shall assume $q \ge 2$ henceforth. We refer the coordinates of $A^n$ as \emph{registers} and the elements of $A^n$ as {\em states}. Denote by $e^k \in A^n$ the state with $1$ at its $k$-th register and zero everywhere else, and by $e^0 \in A^n$ the state with zeros in all its registers. For any $a \in A^n$, we denote by $a_i$ the image of $a$ under the $i$-th coordinate projection.

We are interested in studying \emph{transformations} of $A^n$, i.e., functions from $A^n$ to $A^n$. Denote by $\Tran(A^n)$ the set of all transformations of $A^n$, and by $\Sing(A^n)$ and $\Sym(A^n)$ the set of all singular and nonsingular transformations of $A^n$, respectively. The sets $\Tran(A^n)$ and $\Sing(A^n)$, equipped with the composition of transformations $\circ$, form semigroups called the \emph{full transformation semigroup} on $A^n$ and the \emph{singular semigroup} on $A^n$, respectively. The set $\Sym(A^n)$, equipped with $\circ$, forms a group called the \emph{symmetric group} on $A^n$. 

In general, if $Y$ is a subset of a semigroup $S$, let $\langle Y \rangle$ be the smallest subsemigroup of $S$ containing $Y$. Say that $Y$ is a generating set of $S$ when $S = \langle Y \rangle$. In particular, if $S$ is a subsemigroup of $\Tran(A^n)$, and $Y$ is a generating set of $S$, the triple $(A^n, S, Y)$ is referred as a \emph{finite state-homogeneous automata network} (see \cite[p.~200]{DN05}).

Throughout this paper, we adopt the convention of applying functions on the right; then $(x)f$ denotes the image of $x \in A^n$ under $f \in \Tran(A^n)$, and $f \circ g$ (or simply $fg$) denotes the composition of functions $(x)(f \circ g) = ( (x) f) g$. The size of the image of a transformation $f$ is referred as its \emph{rank} and denoted by $\rk(f)$.

We view each transformation of $A^n$ as a tuple of functions $f = (f_1,\ldots,f_n)$, where $f_i : A^n \to A$ is referred to as the $i$-th coordinate function of $f$. In particular, an $i$-th coordinate function is {\em trivial} if it is equal to the $i$-th projection: $(x)f_i = x_i$, for all $x \in A^n$. 

The following is the key definition of memoryless computation.

\begin{definition}[Instruction] \label{def:instruction}
An {\em instruction} of $A^n$ is a transformation $f : A^n \to A^n$ with at most one nontrivial coordinate function. A permutation instruction is an instruction which maps $A^n$ bijectively onto $A^n$.
\end{definition}

The previous definition implies that the identity transformation of $A^n$ is an instruction. We denote the set of instructions of $A^n$ as $\bar{\mathcal{I}}(A^n)$, and the set of permutation instructions as $\mathcal{I}(A^n)$. We shall simply write $\bar{\mathcal{I}}$ and $\mathcal{I}$ when there is no ambiguity. Note that any nontrivial instruction $f \in \bar{\mathcal{I}}$ is uniquely determined by its nontrivial coordinate function $f_i$; hence, in this case, we say that $f$ {\em updates} the $i$-th register, and we shall often denote $f$ by its \emph{update form}:
\[ f : x_i \gets (x)f_i. \]
For instance, if $A = \GF(2)$ and $n=2$, then $\mathcal{I}$ is given by
\[\begin{tabular}{llll}
    $\{ x_1 \gets x_1$,& $x_1 \gets x_1 + 1$,& $x_1 \gets x_1 + x_2$,& $x_1 \gets x_1 + x_2 +1$,\\
     $x_2 \gets x_2$,& $x_2 \gets x_2 + 1$,& $x_2 \gets x_1 + x_2$,& $x_2 \gets x_1 + x_2 +1\}$,
\end{tabular}\]
where the identity may be represented by either $x_1 \gets x_1$ or $x_2 \gets x_2$.

One of the most important features of the instruction sets $\bar{\mathcal{I}}$ and $\mathcal{I}$ is that they are generating sets of $\Tran(A^n)$ and $\Sym(A^n)$, respectively (see \cite{Bu96,GR15}).

\begin{definition}[Program] \label{def:program}
For any $g \in \Tran (A^n)$, a {\em program} of length $\ell$ computing $g$ is a sequence of instructions $h^{(1)},\ldots,h^{(\ell)} \in \bar{\mathcal{I}}$ such that
$$ g = h^{(1)} \circ \ldots \circ h^{(\ell)}. $$
\end{definition}
For reminder, we apply functions from the left to the right. Thus, the image of $x$ by $g$ is obtained by applying on $x$, first $h^{(1)}$, then $h^{(2)}$, $\dots$, and at least $h^{(l)}$. Unless specified otherwise, we assume that every instruction in a program is different from the identity. Moreover, since the set of instructions updating a given register is closed under composition, we may always assume that $h^{(k+1)}$ updates a different register than $h^{(k)}$ for all $k$. In this paper, we shall work with particular subsets of instructions $Y \subseteq \bar{\mathcal{I}}$. Hence, for any transformation $g \in \langle Y \rangle$, we define the {\em procedural complexity of $g$ with respect to $Y$} as the minimum length of a program computing $g$ with instructions from $Y$. The procedural complexity of $g$ with respect to $\bar{\mathcal{I}}$ is simply called the \emph{procedural complexity of $g$}.

\begin{example}
In order to illustrate our notations, let us write the program computing the swap of two variables, i.e. $g : \mathbb{Z}_q^2 \to \mathbb{Z}_q^2$ where $(x_1,x_2)g = (x_2,x_1)$. It is given as follows:
\[ g = h^{(1)} \circ h^{(2)} \circ h^{(3)}, \]
where
\begin{align*}
    h^{(1)} : \quad x_1 &\gets x_1 + x_2 \\
   h^{(2)} :  \quad x_2 &\gets x_1 - x_2 \\
    h^{(3)} : \quad x_1 &\gets x_1 - x_2,
\end{align*}
or, equivalently
\begin{align*}
	(x_1, x_2) h^{(1)} &= (x_1 + x_2, x_2),\\
	(x_1, x_2) h^{(2)} &= (x_1, x_1 - x_2),\\
	(x_1, x_2) h^{(3)} &= (x_1 - x_2, x_2).
\end{align*}

By simple extension of $g$ we have,
	\begin{align*}
	(x_1, x_2) g &= (x_1, x_2) (h^{(1)} \circ h^{(2)} \circ h^{(3)}),\\
	 &= (x_1+x_2, x_2) (h^{(2)} \circ h^{(3)}),\\
	 &= (x_1+x_2, x_1) h^{(3)}\\
	 &= (x_2, x_1).
	\end{align*}
 
\end{example}

This paper is organised as follows. In Section \ref{sec:sim}, we introduce our notion of \emph{simulation}, which is a way of computing a transformation of $A^n$ using $m \geq n$ instructions that may depend on $m-n$ extra registers. We say that a transformation of $A^m$ is \emph{$n$-complete} if the instructions induced by its coordinate functions may simulate any transformation of $A^n$. We show that there is no $n$-complete transformation that uses no extra registers, but that there is one that uses only one extra register. Then, we construct an $n$-complete transformation with maximal time of simulation $2n$, and conjecture that $2n$ is the lower bound for the maximal time of simulation of any $n$-complete transformation. 

 In Section \ref{sec:seq}, we introduce the notion of \emph{sequential simulation}. A transformation of $A^m$ is \emph{sequentally $n$-complete} if it may sequentially simulate any sequence of transformations of $A^n$. We establish that any such transformation requires at least $n$ extra registers, and we construct one with $n+2$ extra registers when $q \geq 3$, and $n+3$ extra registers when $q=2$. Then, we establish lower bounds for the maximal and minimal time of simulation of sequentially $n$-complete transformations, and construct explicit examples that asymptotically tend to these bounds.  
 
Finally, in Section \ref{sec:asy}, we show that there is no complete transformation that updates all the registers in parallel; however, we construct a sequentially $n$-complete transformation that updates all but one register in parallel. The first result shows that some asynchronism is required in order to obtain completeness; conversely, the second result shows that the least amount of asynchronism is enough to obtain completeness.

Simulation on automata networks is a well-studied subject, e.g. see \cite{DN05, E91, AN87, T79,T82,T83,T85,T86,T88}. The emphasis in the majority of these works has been on the structure of the so called \emph{interaction graph} of $f \in \Tran(A^n)$, which is a directed graph on $\{ 1, \dots, n \}$ with an arc from $j$ to $i$ if and only if $f_i$ really depends on $x_j$. On the other hand, in this paper we always allow every interaction and focus on other aspects such as the space and time of simulations.     

Our work differentiates in several aspects from results on completeness in other models of computation. First, we always consider a finite computational space, so well-known models, such as Turing machines, are incomparable. Second, as we allow our registers to be updated asynchronously, our model is more general and flexible than synchronous models like cellular automata. This point is illustrated by the results in Section \ref{sec:asy}, especially in the sequentially $n$-complete transformation that updates all but one register in parallel. Indeed, this transformation only uses asynchronism to reset a counter, i.e. to place the state in a special initial configuration; once this is done, the parallel updates are then sequentially $n$-complete.



\section{Simulation of transformations} \label{sec:sim}

Denote $[n] := \{1, \dots, n\}$. For $m \geq n$, let $\pr_{[n]} : A^m \to A^n$ be the $[n]$-projection of $A^m$ to $A^n$, i.e., $(x_1, \dots, x_m)\pr_{[n]} = (x_1, \dots, x_n)$. This is extended to any $I \subseteq [m]$ in the natural way. We shall simplify notation and write $x_I = (x)\pr_I$. 

For any $f: A^m \to A^m$ and $i \in [m]$, $F^{(i)} : A^m \to A^m$ is the instruction \emph{induced by the coordinate function} $f_i$:
\[ F^{(i)} : \quad x_i \gets (x)f_i. \]
We then consider
\[ S_f := \langle F^{(1)}, \dots, F^{(m)} \rangle \subseteq \Tran(A^m).\]
In order to make notation more concise, for any sequence $\sigma = (\sigma_1, \dots, \sigma_t)$ of coordinates in $[m]$, we denote
\[
	F^\sigma = F^{(\sigma_1, \dots, \sigma_t)} :=  F^{(\sigma_1)} \circ F^{(\sigma_2)} \circ \cdots \circ F^{(\sigma_t)}.
\]
Then $S_f$ is the set of all possible $F^\sigma$.

 \begin{definition}[Simulation]
 Let $m \geq n \geq 1$. We say that $f: A^m \to A^m$ \emph{simulates} $g : A^n \to A^n$ if there exists $h \in S_f$ such that $h \circ \pr_{[n]} = \pr_{[n]} \circ  g$, or equivalently $(x)h_{[n]} = (x_{[n]})g$ for all $x \in A^m$. The \emph{time of simulation}, denoted by $t_f (g)$, is the procedural complexity of $h$ with respect to $\{F^{(1)}, \dots, F^{(m)} \}$. 
 \end{definition}
 
Compare our previous definition of simulation with the definition of \emph{simulation by projection} for finite state-homogeneous automata networks that appears in \cite[p.~208]{DN05}.

\begin{definition}[$n$-Complete]
 Let $m \geq n \geq 1$. A transformation $f: A^m \to A^m$ is called \emph{$n$-complete of size $m$} if it may simulate any transformation in $\Tran(A^n)$. The \emph{time} of $f$ is $\mathbf{t}_f(n) := \max \{ t_f (g) : g\in \Tran(A^n) \}$.
\end{definition} 

We exhibit a simple example of an $n$-complete transformation. This example will also allow us to introduce some concepts and notation used throughout this paper. We begin by constructing a simple, yet powerful tool: a switch. This will allow us to encode bits (or $q$-ary symbols) and as such, to be able to describe anything we want. Note that we cannot use only one register to encode one bit, because we do not know the initial state of that register. Instead, we will use two registers $a$ and $b$, and we let 
\begin{align*}
	(x_a, x_b) f_a &= x_b,\\
	(x_a, x_b) f_b &= x_a + 1.
\end{align*}
(There are several variants to this construction.) In this case, we can say that the switch is on if $x_a \ne x_b$ and the switch is off if $x_a = x_b$. Then the instruction $F^{(b)}$ turns the switch on, while $F^{(a)}$ turns it off.

\begin{example}\label{ex:complete}
For any $n \ge 2$, there is an elementary example of an $n$-complete transformation $f \in \Tran(A^m)$, with size $m = 2n + 2\mathcal{T}$, where $\mathcal{T} = |\Tran(A^n)|$. In order to describe it, we let $[m] = [n] \cup [n]' \cup \{a_1, \dots, a_\mathcal{T}\} \cup \{b_1, \dots, b_\mathcal{T}\}$, where $[n]' = \{ 1', \dots, n' \}$ has cardinality $n$. We also enumerate by $p^1, \dots, p^\mathcal{T}$, all the transformations in $\Tran(A^n)$. Then $f$ is defined as follows:
\begin{alignat*}{2} 
	(x)f_v &= \begin{cases}
		(x_{[n]'}) p^s_v  & \text{if } x_{a_s} \ne x_{b_s} \text{ and } x_{a_r} = x_{b_r} \  \forall 1 \le r \le \mathcal{T}, r \ne s  \\
		x_v & \text{otherwise.}
	\end{cases}  &\qquad& v \in [n]
	\\ \\
	(x)f_{v'} &= x_v, &\qquad& v' \in [n]' 
	\\ \\
	(x)f_{a_s} &= x_{b_s}, &\qquad& 1 \le s \le \mathcal{T} 
	\\ \\
	(x)f_{b_s} &= x_{a_s} + 1, &\qquad& 1 \le s \le \mathcal{T}.
\end{alignat*}

We now show that $f$ is indeed an $n$-complete transformation. Suppose that we want to simulate $p^s$. Then, this may be achieved as follows.
\begin{description}
	\item[Step 1.] Copy the first $n$ registers into $[n]'$: $F^{(1', \dots, n')}$.
	
	\item[Step 2.] Turn all switches off:  $F^{(a_1, \dots, a_\mathcal{T})}$.
	
	\item[Step 3.] Turn the right switch on: $F^{(b_s)}$.
	
	\item[Step 4.] Compute $p^s$: $F^{(1, \dots, n)}$.
\end{description}
Or more concisely, the transformation $h = F^{\sigma}$, where
\[ 
	\sigma = (1', \dots, n', a_1, \dots, a_\mathcal{T}, b_s, 1, \dots, n),
\]
satisfies $(x) h_{[n]} = (x_{[n]}) p^s$.
\end{example}

\begin{figure}
\centering
\begin{tikzpicture}[scale=1.5]
	\node[state] 	(1) at (0,0) {$1$};
	\node[state] 	(2) at (1,0) {$2$};
	\node 			(3) at (2,0) {$...$};
	\node[state] 	(n) at (3,0) {$n$};
	
	\draw (-0.5,1) -- +(4,0) node[below left, font=\small] {$[n]$: First $n$ registers} -- +(4,-1.5) -- +(0,-1.5) -- +(0,0);

	\begin{scope}[yshift = -2cm]	
	\node[state] 	(11) at (0,0) {$1'$};
	\node[state] 	(22) at (1,0) {$2'$};
	\node 			(33) at (2,0) {$...$};
	\node[state] 	(nn) at (3,0) {$n'$};

	\draw (-0.5,-1) -- +(4,0) node[above left, font=\small] {$[n]'$: Copy of $[n]$} -- +(4,1.5) -- +(0,1.5) -- +(0,0);
	\end{scope}

	\path[-latex] 	(1) edge (11)
					(2) edge (22)
					(n) edge (nn);
	
	\begin{scope}[xshift = 6cm]
	\node[state] 	(a1) at (0,0) {$a_1$};
	\node 			(a2) at (1,0) {$...$};
	\node[state] 	(as) at (2,0) {$a_s$};
	\node 			(a3) at (3,0) {$...$};
	\node[state] 	(af) at (4,0) {$a_\mathcal{T}$};
	
	\node[state] 	(b1) at (0,-1) {$b_1$};
	\node 			(b2) at (1,-1) {$...$};
	\node[state] 	(bs) at (2,-1) {$b_s$};
	\node 			(b3) at (3,-1) {$...$};
	\node[state] 	(bf) at (4,-1) {$b_\mathcal{T}$};
	
	\path 	(b1) edge (a1)
			(bs) edge (as)
			(bf) edge (af);

	\draw (-0.5,1) -- +(5,0) node[below left, font=\small] {$S$: Switches} -- +(5,-2.5) -- +(0,-2.5) -- +(0,0);
	\draw (-0.4, 0.4) -- +(0.8,0) -- +(0.8,-1.8) -- +(0,-1.8) -- +(0,0);
	\draw (2-0.4, 0.4) -- +(0.8,0) -- +(0.8,-1.8) -- +(0,-1.8) -- +(0,0);
	\draw (4-0.4, 0.4) -- +(0.8,0) -- +(0.8,-1.8) -- +(0,-1.8) -- +(0,0);
	\end{scope}
	
	\draw[very thick, -latex] (5.5,0) -- (3.5,0);
	\draw[very thick, -latex] (2, -1.5) -- (2, -0.5);
\end{tikzpicture}
\caption{The $n$-complete transformation of Example \ref{ex:complete}}
\end{figure}
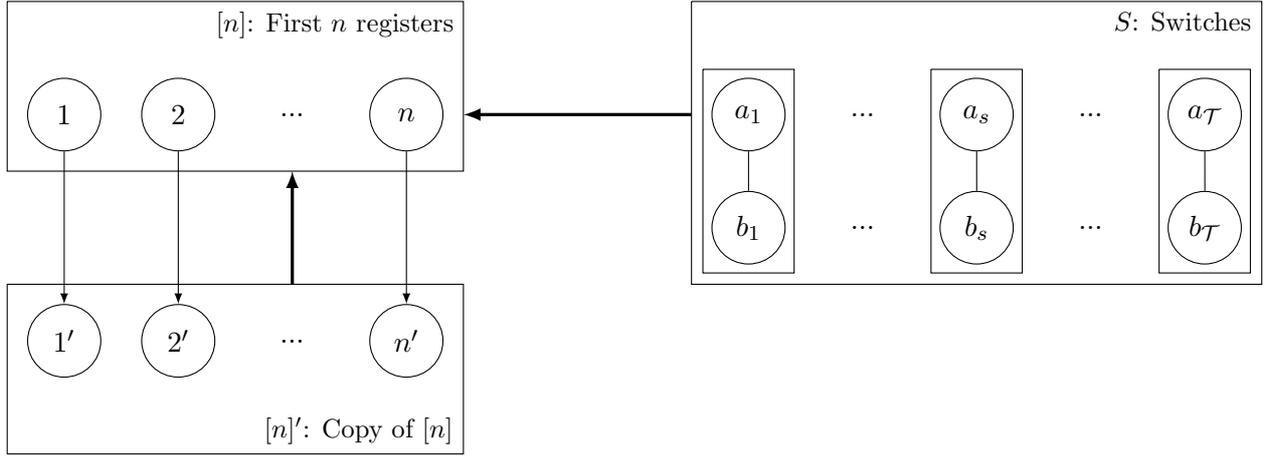

In the following sections, we study $n$-complete transformations with minimal size and time.


\subsection{Complete transformations of minimal size}

In this section, we denote the transposition of $u,v \in A^n$ as $(u,v)$, where, for any $x \in A^n$,
$$
	(x) (u,v) = \begin{cases}
		v &\text{if } x = u\\
		u &\text{if } x = v\\
		x &\text{otherwise,}
	\end{cases}
$$
and the assignment of $u$ to $v$ as $(u \to v)$, where
$$
	(x) (u \to v) = \begin{cases}
		v &\text{if } x = u\\
		x &\text{otherwise.}
	\end{cases}
$$
For any $f \in \Tran(A^n)$ and $g \in \Sym(A^n)$, the conjugation of $f$ by $g$ is $f^g := g^{-1} f g \in \Tran(A^n)$.

It was determined in \cite{CFG14a} that, unless $\lvert A \rvert =n=2$, there exists a set $Y \subset \mathcal{I}$ of size $n$ that generates the whole symmetric group $\Sym(A^n)$; hence, the set $Y \cup \{ (e^0 \to e^1)\}$ of $n+1$ instructions suffices to generate the full transformation semigroup $\Tran(A^n)$. In the following theorem, we prove there is no set of $n$ instructions that generate $\Tran(A^n)$, which implies that there is no $n$-complete transformation of size $n$.

\begin{theorem} \label{th:no_sing}
For any $n \geq 1$, there is no transformation $f \in \Tran(A^n)$ such that $\Sing(A^{n}) \subseteq S_f$.
\end{theorem}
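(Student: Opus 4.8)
The plan is to argue by contradiction: suppose $f \in \Tran(A^n)$ generates a semigroup $S_f$ containing $\Sing(A^n)$, and derive a contradiction by a counting/structure argument on the generators. Recall that $S_f = \langle F^{(1)}, \dots, F^{(n)} \rangle$, where each $F^{(i)}$ updates only register $i$; so $S_f$ is generated by $n$ instructions, at most one updating each register. Since $\Sing(A^n)$ has rank $q^n - 1$ as its maximal singular rank, at least one of the $F^{(i)}$ must be singular (a product of permutation instructions is a permutation), and in fact we will need to control how ranks can drop: composing instructions updating distinct registers, a single singular instruction $F^{(i)}$ of rank $r$ can pull the rank of a product down, but only in a constrained way because $F^{(i)}$ fixes all registers except $i$.

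The key structural step is to understand the rank of an arbitrary element of $S_f$. Write any word in the $F^{(i)}$; by the remark after Definition~\ref{def:program} we may assume consecutive letters update distinct registers. I expect the crucial observation to be that the image of any $h \in S_f$ is contained in a set of a very restricted shape — for instance, that the ``collapsing'' caused by the unique singular generator $F^{(j)}$ only identifies states that already agree on all registers other than $j$ after the relevant prefix has been applied, so the reachable images cannot realize all the image sets that singular transformations of $A^n$ require. Concretely, I would show that there is some pair of states $u, v \in A^n$ differing in at least two coordinates such that no $h \in S_f$ maps $u$ and $v$ to the same state unless it also maps a whole ``line'' through them together; this contradicts the existence in $\Sing(A^n)$ of transformations of rank $q^n - 1$ that identify exactly $u$ and $v$ and fix everything else (the assignment $(u \to v)$ when $u,v$ differ in two coordinates).

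The cleanest route is probably a dimension/independence count: the $n$ coordinate functions $f_1, \dots, f_n : A^n \to A$ give $n$ instructions, and I want to bound the rank of products. A product $F^{(i_1)} \cdots F^{(i_\ell)}$ has image whose projection to register $i$ is ``frozen'' whenever no later letter updates register $i$; tracking this, the image of any word is contained in a union of at most (number of distinct singular generators used, with multiplicity bounded by a combinatorial argument) translates of a structured set, and this forces $\mathrm{rk}(h) \ge$ some bound strictly larger than what is needed, OR forces the identified states to lie in a common coset. Since $\Sing(A^n)$ contains rank-$(q^n-1)$ maps identifying any prescribed pair of distinct states, taking $u,v$ in general position (differing in $\ge 2$ coordinates, which exists because $n \ge 2$ and $q \ge 2$) yields the contradiction.

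The main obstacle I anticipate is making precise and tight the claim that a single singular generator $F^{(j)}$ cannot be ``reused'' to collapse arbitrary pairs of states: because $F^{(j)}$ can be applied many times interleaved with permutation instructions on other registers, naively the rank could keep dropping. The heart of the proof must be a lemma showing that, modulo the action of the other ($n-1$) generators, all the collapsing happens ``along register $j$'' in a way that cannot separate a generic pair $u,v$ from the line through them parallel to the $j$-axis — equivalently, that $S_f$ is contained in a proper subsemigroup of $\Tran(A^n)$ characterized by preserving some equivalence relation or partition structure incompatible with $\Sing(A^n)$. Pinning down exactly which invariant of $S_f$ to use — and verifying $\Sing(A^n)$ violates it — is the crux; everything else (the reduction to words with distinct consecutive letters, the existence of a unique singular generator, the rank bookkeeping) is routine.
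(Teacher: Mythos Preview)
Your plan has the right test objects (rank-$q^n-1$ assignments $(a\to b)$) but two genuine gaps.

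First, you treat ``the existence of a unique singular generator'' as routine. It is not: nothing rules out several of the $F^{(i)}$ being singular, and in fact the paper's proof splits into exactly these two cases. When at least two generators, say $F^{(1)}$ and $F^{(2)}$, are singular, the argument and the witnessing assignment are different from the single-singular case. So you cannot simply assume uniqueness and proceed.

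Second, the invariant you are reaching for (``collapsing a whole line'', a dimension/rank count) is more elaborate than what is needed, and you yourself flag it as the unresolved crux. The paper's argument avoids all rank bookkeeping by looking only at the \emph{first} singular letter in a word for a given $g\in\Sing(A^n)$. The prefix before that letter is a permutation, and it is a product of generators that fix certain registers (in Case~1 it fixes registers $1$ and $2$; in Case~2, where $F^{(1)}$ is the only singular generator, it fixes register $1$). One then pulls back the pair collapsed by that first singular letter through the permutation prefix and reads off a coordinate constraint on \emph{some} pair $x\neq y$ with $(x)g=(y)g$: in Case~1 the pair must agree on register $2$, in Case~2 the pair must differ on register $1$. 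Each constraint is violated by an appropriate assignment $(a\to b)$ (take $a_1\neq b_1,\;a_2\neq b_2$ in Case~1; take $a_1=b_1$ in Case~2). That is the whole proof: no line-collapse lemma, no rank counting, no reuse analysis of $F^{(j)}$ is required once you focus on the first singular instruction.
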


\begin{proof}
The case $n=1$ is trivial, so assume that $n \geq 2$. Suppose that $Y := \{ F^{(1)},\ldots,F^{(n)} \}$ is a set of instructions that generate a semigroup containing all singular transformations, where $F^{(i)}$ updates the $i$-th register. Since the composition of permutations is a permutation, at least one of these generating instructions must be singular. 

First, assume that at least two instructions of $Y$, say $F^{(1)}$ and $F^{(2)}$, are singular. We claim that no assignment $g = (a \to b )$, with $a_i \neq b_i$, $i = 1,2$, can be computed using only instructions in $Y$. Indeed, suppose that $F^{(1)}$ is the first singular instruction in a program computing $g$, so $g = \pi \circ  F^{(1)} \circ h$, for some $h \in \Tran(A^n)$ and $\pi \in \langle F^{(3)},\dots, F^{(n)} \rangle$. As $\pi \circ F^{(1)}$ is singular, there exist $u, v \in A^n$, $u \neq v$, such that $(u) \pi \circ F^{(1)} = (v) \pi \circ F^{(1)}$, which implies that $(u)g = (v) g$. However, as $\pi \circ F^{(1)}$ does not update the second register, we have $\{u,v\} \neq \{ a , b \}$, which contradicts the definition of the assignment $g$.

By the previous paragraph, there may be only one singular instruction in $Y$, say $F^{(1)}$. Let $u,v \in A^n$, $u_1 \neq v_1$, be such that $(u) F^{(1)} = (v) F^{(1)}$. For any $g \in \Sing(A^n)$, we may write $g = \pi \circ F^{(1)}\circ h$, where $h \in \Tran(A^n)$ and $\pi \in \langle F^{(2)}, \dots, F^{(n)} \rangle \subseteq \Sym(A^n)$. Letting $x = (u)\pi^{-1}$ and $y= (v) \pi^{-1}$, we see that $x_1 \neq y_1$ and $(x) g = (y) g$. However, this means that assignments such as $g = ( a \to b )$, with $a \neq b$, $a_1 = b_1$, cannot be computed.
\end{proof}

\begin{corollary}
For any $n \geq 1$, there is no $n$-complete transformation of size $n$.
\end{corollary}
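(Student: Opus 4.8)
The plan is to derive this as an immediate consequence of Theorem~\ref{th:no_sing}, the real content having already been established there. The key observation is that when the size equals $n$, i.e. $m = n$, the projection $\pr_{[n]} : A^n \to A^n$ is simply the identity transformation of $A^n$. Hence, for a transformation $f : A^n \to A^n$, the simulation condition $\pr_{[n]} \circ g = h \circ \pr_{[n]}$ collapses to $g = h$, so $f$ simulates $g$ precisely when $g \in S_f$.

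First I would make this reduction explicit: $f$ is $n$-universal of size $n$ if and only if every $g \in \Tran(A^n)$ lies in $S_f$, that is, $S_f = \Tran(A^n)$. Then I would note that $\Sing(A^n) \subseteq \Tran(A^n) = S_f$, so in particular $\Sing(A^n) \le S_f$. Since $S_f = \langle F^{(1)}, \dots, F^{(n)} \rangle$ is generated by the $n$ instructions induced by the coordinate functions of $f$, this directly contradicts Theorem~\ref{th:no_sing}, which asserts that no transformation $f \in \Tran(A^n)$ satisfies $\Sing(A^n) \le S_f$. Therefore no $n$-universal transformation of size $n$ can exist.

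There is essentially no obstacle here beyond unwinding the definition of simulation in the boundary case $m = n$; all the difficulty was absorbed into the proof of Theorem~\ref{th:no_sing}, which rules out any $n$-element instruction set generating $\Tran(A^n)$.
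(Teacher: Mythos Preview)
Your proposal is correct and is exactly the argument the paper intends: the corollary is stated immediately after Theorem~\ref{th:no_sing} with no separate proof, precisely because when $m=n$ the projection $\pr_{[n]}$ is the identity, so $n$-universality of size $n$ forces $\Tran(A^n)=S_f$ and hence $\Sing(A^n)\le S_f$, contradicting the theorem. One small wording point: your closing sentence says Theorem~\ref{th:no_sing} ``rules out any $n$-element instruction set generating $\Tran(A^n)$,'' but strictly speaking the theorem is about the specific set $\{F^{(1)},\dots,F^{(n)}\}$ arising from a single $f$ (one instruction per register), which is all you need here.
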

  
In fact, the minimum size is exactly $n+1$.

\begin{theorem}
For all $n \ge 1$ and $q \ge 2$, there exists an $n$-complete transformation of size $n+1$.
\end{theorem}

We first deal with the special case $n=1$.

\begin{lemma}
If $n=1$ and $q=2$, then there exists a $1$-complete transformation of size $2$.
\end{lemma}

\begin{proof}
Let $(x_1, x_2) f = (\neg (x_1 \land x_2), x_1)$. It is easy to verify that $f$ is indeed $1$-complete. We shall do it explicitly in order to illustrate some notation used later on. For all $x = (x_1, x_2)$, we have the following chain, where $y \xrightarrow{i} z$ means $z = (y) F^{(i)}$.
\[
	(x_1, x_2) \xrightarrow{2} (x_1, x_1) \xrightarrow{1} (\neg x_1, x_1) \xrightarrow{1} (1, x_1) \xrightarrow{2} (1,1) \xrightarrow{1} (0,1).
\]
Then all four functions of one Boolean variable (namely, $x_1$, $\neg x_1$, $0$ and $1$) are simulated by $f$.
\end{proof}

\begin{lemma}
If $n=1$ and $q \geq 3$, then there exists a $1$-complete transformation of size $2$.
\end{lemma}
\begin{proof}
Let $f : A^2 \to A^2$ be defined as follows
\begin{align*}
	(x_1, x_2) f_1 & = \begin{cases}
	x_1 + 1 & \text{ if }  x_1 = x_2  \\
	1 & \text{ if }  x_1 = 0   \text{ and } x_2 = q - 1   \\
	0 & \text{ if } x_1 = 1  \text{ and } x_2 = 0    \\
	x_1 & \text{ otherwise }
	\end{cases}  \\
	(x_1, x_2) f_2 &= x_1
\end{align*}
Let us prove that $f$ is $1$-complete.
We shall use the following generating set of $\Tran(A)$: the cycle $c = (0 \espace 1 \dots \espace q-1)$, the transposition $k = (0 \espace 1)$, and the assignment $d = (0 \to 1)$. All that is left to prove is that those transformations, acting on $A$, can be simulated by $f$.
Firstly, it is easy to check that $F^{(2, 1)}$ simulates the cycle $c = (0 \espace 1 \dots \espace q-1)$, since
\[
\begin{array}{lll}
(x_1, x_2) & \xrightarrow{2} (x_1, x_1)& \xrightarrow{1} (x_1+1,x_1)
\end{array}
\]
Secondly, $F^{( (2, 1)^{q},1 )}$ simulates the transposition $k = (0 \espace 1)$, since for any $2 \leq i \leq q-1$,
\[
\begin{array}{lll}
(i, x_2) & \xrightarrow{(2,1)^{q}} (i, i-1)& \xrightarrow{1} (i, i-1)\\
(0, x_2) & \xrightarrow{(2,1)^{q}} (0, q-1)& \xrightarrow{1} (1, q-1)\\
(1, x_2) & \xrightarrow{(2,1)^{q}} (1, 0)& \xrightarrow{1} (0, 0)
\end{array}
\]
Thirdly, $F^{( (2, 1)^{q},1,1 )}$ simulates the assignment $d = (0 \to 1)$, since for any $2 \leq i \leq q-1$,
\[
\begin{array}{llll}
(i, x_2) & \xrightarrow{(2,1)^{q}} (i, i-1)& \xrightarrow{1} (i, i-1) &\xrightarrow{1} (i, i-1)\\
(0, x_2) & \xrightarrow{(2,1)^{q}} (0, q-1)& \xrightarrow{1} (1, q-1) &\xrightarrow{1} (1, q-1)\\
(1, x_2) & \xrightarrow{(2,1)^{q}} (1, 0)& \xrightarrow{1} (0, 0)  &\xrightarrow{1} (1,0)
\end{array}
\]
\end{proof}

We now tackle the case $n = q = 2$.

\begin{lemma}
If $n=2$ and $q=2$, then there exists a $2$-complete transformation of size $3$.
\end{lemma}
\begin{proof}
Let $f$ be defined as
\begin{align*}
(x) f_1 &:= \begin{cases}
x_1 + 1 	& \text{if } x_1 = x_3 \\
x_2 			& \text{otherwise }
\end{cases} \\
(x) f_2 &:= x_2 + x_3\\
(x) f_3 &:= x_1
\end{align*}
Firstly, $F^{(3, 1, 2)}$ simulates the cycle $c = (00 \espace 10 \espace 01 \espace 11)$, since
\[
\begin{array}{llll}
(0, 0, x_3) & \xrightarrow{3} (0, 0, 0)& \xrightarrow{1} (1, 0, 0)& \xrightarrow{2} (1, 0, 0)\\
(0, 1, x_3) & \xrightarrow{3} (0, 1, 0)& \xrightarrow{1} (1, 1, 0)& \xrightarrow{2} (1, 1, 0)\\
(1, 0, x_3) & \xrightarrow{3} (1, 0, 1)& \xrightarrow{1} (0, 0, 1)& \xrightarrow{2} (0, 1, 1)\\
(1, 1, x_3) & \xrightarrow{3} (1, 1, 1)& \xrightarrow{1} (0, 1, 1)& \xrightarrow{2} (0, 0, 1)
\end{array}
\]
Secondly, $F^{(3, 1, 1, 2)}$ simulates the cycle $k = (10 \espace 01 \espace 11)$, since
\[
\begin{array}{lllllll}
(0, 0, x_3) & \xrightarrow{3} (0, 0, 0) & \xrightarrow{1} (1, 0, 0) & \xrightarrow{1} (0, 0, 0) & \xrightarrow{2} (0, 0, 0)\\
(0, 1, x_3) & \xrightarrow{3} (0, 1, 0) & \xrightarrow{1} (1, 1, 0) & \xrightarrow{1} (1, 1, 0) & \xrightarrow{2} (1, 1, 0)\\
(1, 0, x_3) & \xrightarrow{3} (1, 0, 1) & \xrightarrow{1} (0, 0, 1) & \xrightarrow{1} (0, 0, 1) & \xrightarrow{2} (0, 1, 1)\\
(1, 1, x_3) & \xrightarrow{3} (1, 1, 1) & \xrightarrow{1} (0, 1, 1) & \xrightarrow{1} (1, 1, 1) & \xrightarrow{2} (1, 0, 1)
\end{array}
\]
Thirdly, $F^{(3, 1, 1, 2, 1, 2)}$ simulates the transformation $d = (01 \espace 11)(00 \to 10)$, since
\[
\begin{array}{llllll}
(0, 0, x_3) & \xrightarrow{3} (0, 0, 0) & \xrightarrow{1} (1, 0, 0) & \xrightarrow{1} (0, 0, 0) & \xrightarrow{2} (0, 0, 0)  \xrightarrow{1} (1, 0, 0)  \xrightarrow{2} (1, 0, 0) \\
(0, 1, x_3) & \xrightarrow{3} (0, 1, 0) & \xrightarrow{1} (1, 1, 0) & \xrightarrow{1} (1, 1, 0) & \xrightarrow{2} (1, 1, 0)  \xrightarrow{1} (1, 1, 0)  \xrightarrow{2} (1, 1, 0) \\
(1, 0, x_3) & \xrightarrow{3} (1, 0, 1) & \xrightarrow{1} (0, 0, 1) & \xrightarrow{1} (0, 0, 1) & \xrightarrow{2} (0, 1, 1)  \xrightarrow{1} (1, 1, 1)  \xrightarrow{2} (1, 0, 1) \\
(1, 1, x_3) & \xrightarrow{3} (1, 1, 1) & \xrightarrow{1} (0, 1, 1) & \xrightarrow{1} (1, 1, 1) & \xrightarrow{2} (1, 0, 1)  \xrightarrow{1} (0, 0, 1)  \xrightarrow{2} (0, 1, 1)
\end{array}
\]
\end{proof}

We now solve all the other cases.

\begin{lemma}
If $n \ge 3$ and $q=2$ or if $n \ge 2$ and $q \ge 3$, then there exists an $n$-complete transformation of size $n+1$.
\end{lemma}

\begin{proof}
If $n \ge 3$ and $q=2$ or if $n \ge 2$ and $q \ge 3$, by \cite{CFG14a}, there exists a set of $n$ (permutation) instructions $g^{(1)}, \dots, g^{(n)}$ generating $\Sym(A^n)$ such that there exists $z \in A^n$ fixed by $g^{(1)}$ but not by $g^{(2)}$. We then denote the assignment instruction $(x)d = (x)D^{(1)} = (z \to z + e^1)$, where $e^1$ is the unit vector $(1, 0, \dots, 0)$. We also denote the product of the orders of $g^{(v)}$ for $v \in [n]$ as $\Omega$; by definition, $g^{(v)^\Omega} = \id$ for all $v$. Since $d$ is idempotent, we have $d^\Omega = d$. Finally, we denote an element of $A^{n+1}$ as $(x, \alpha)$ where $x \in A^n$ and $\alpha \in A$.

We define $f$ as follows.
\begin{align*}
	(x, \alpha) f_1 &= \begin{cases}
	(x) g_1 &\text{if } \alpha = 0\\
	(x) d_1 &\text{if } \alpha = 1\\
	x_1			& \text{otherwise},
	\end{cases} \\
	\\
	(x, \alpha) f_v &= (x) g_v & 2 \le v \le n\\
	\\
	(x, \alpha) f_{n+1} &= \delta\left( (x, \alpha), (z, 0) \right),
\end{align*}
where $\delta(s,t)$ is the Kronecker delta function.

This time, the initialisation step brings $\alpha$ to $0$. For all $(x, \alpha)$, $(x, \alpha) F^{(n+1, 2, n+1, n+1, (2)^{\Omega - 1})} = (x, 0)$. Indeed, for any $\alpha \in A$, any $\beta \in A \setminus \{0\}$, and any $x \ne z$ we have
\[
\begin{array}{lllll}
	(z, 0) 		& \xrightarrow{n+1} (z, 1)	&\xrightarrow{2} (g^{(2)}(z), 1) &\xrightarrow{(n+1)^2} (g^{(2)}(z), 0) &\xrightarrow{(2)^{\Omega - 1}} (z, 0)\\
	(z, \beta) 	& \xrightarrow{n+1} (z, 0)	&\xrightarrow{2} (g^{(2)}(z), 0) &\xrightarrow{(n+1)^2} (g^{(2)}(z), 0) &\xrightarrow{(2)^{\Omega - 1}} (z, 0)\\
	(x, \alpha) & \xrightarrow{n+1} (x, 0)	&\xrightarrow{2} (g^{(2)}(x), 0) &\xrightarrow{(n+1)^2} (g^{(2)}(x), 0) &\xrightarrow{(2)^{\Omega - 1}} (x, 0).
\end{array}
\]

Thus, we focus on the set $\tilde{A} = \{(x, 0) : x \in A^n \}$ and we prove that $f$ can simulate the generating set $\{ g^{(1)}, \dots,g^{(n)}, d \}$ of $\Tran(A^n)$ acting on $\tilde{A}$. Firstly, $(x, 0) F^{(v)} = ((x) g^{(v)}, 0)$ for all $v \in [n]$. Secondly, $(x, 0) F^{(n+1, (1)^\Omega, n+1)} = ((x)d, 0)$, since for every $y \ne z$ we have
\[
\begin{array}{llll}
	(z, 0) & \xrightarrow{n+1} (z, 1) &\xrightarrow{(1)^\Omega} (z + e^1, 1) 	&\xrightarrow{n+1} (z+ e^1, 0)\\
	(y, 0) & \xrightarrow{n+1} (y, 0) &\xrightarrow{(1)^\Omega} (y, 0) 			&\xrightarrow{n+1} (y, 0).
\end{array}
\]
\end{proof}

\subsection{Time of $n$-complete transformations of size $n+2$}

We now exhibit an $n$-complete transformation of size $n+2$ and time at most $6 \lceil \log_2(q) \rceil (q-1) n q^{n-1} + O(q^{n})$. Before this, we need the following result of memoryless computation.

\begin{theorem} \label{th:MC_n+2}
Let $\lvert A \rvert = q$ and $n \geq 2$. Then $\Tran(A^n)$ is generated by a set of instructions $Y$, containing at most $q$ instructions per register, such that any transformation of $A^n$ has procedural complexity with respect to $Y$ of at most $ 3 \lceil \log_2(q) \rceil (q-1) n q^{n-1} + O(q^{n})$.
\end{theorem}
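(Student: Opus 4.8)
The plan is to build the generating set $Y$ in two stages: first handle the symmetric group $\Sym(A^n)$, then adjoin a single singular instruction and show that, together with the permutations, it generates all of $\Sing(A^n)$ with controlled complexity. For the permutation part, I would invoke the result of \cite{CFG14a} (and the memoryless-computation machinery of \cite{GR15}) that $\Sym(A^n)$ is generated by instructions with $O(\log q)$ instructions per register. The key quantitative input is that every permutation of $A^n$ can be realised as a program of length $O(n q^{n-1})$ over this set, possibly after paying a $\lceil \log_2 q\rceil$ factor to decompose each coordinate update into ``elementary'' ones of the allowed form; multiplying the two gives the $3\lceil\log_2 q\rceil(q-1)nq^{n-1}$-type bound for permutations. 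I would isolate the precise statement I need about $\Sym(A^n)$ as a black box and cite it, so that the novel content here is purely the passage from $\Sym$ to $\Tran$.

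Next I would adjoin one nontrivial singular instruction, e.g. the elementary collapse $c:\ x_1 \gets 0$ if $x = u$ and $x_1\gets x_1$ otherwise — equivalently an assignment of the form $(u\to v)$ differing in a single register — so that $Y$ still has at most $q$ instructions per register. The standard reduction (as in the proof of Theorem~\ref{th:no_sing}, read constructively) is that any $g\in\Sing(A^n)$ factors as a bounded-length product of pieces, each of which performs one ``identification'' of a pair of states followed by a permutation: one conjugates $c$ by suitable permutations to merge any chosen pair $\{a,b\}$, and since a singular transformation of $A^n$ has image of size at most $q^n - 1$, at most $q^n - 1$ such merges suffice, each costing one application of $c$ plus two permutation programs of cost $O(\log q \cdot n q^{n-1})$. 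Summing over at most $q^n$ stages yields a bound of order $\log q \cdot n q^{n-1}\cdot q^n$, which is too large, so the real work is to be smarter: one should instead collapse a whole fibre (or a level set of a chosen coordinate) in one permutation-conjugated block, bringing the number of stages down to $O(q^n/q) = O(q^{n-1})$ or, better, organising the collapse coordinate-by-coordinate so that the total is $O(nq^{n-1})$ permutation programs. That is where the claimed asymptotics come from.

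Concretely, I would decompose an arbitrary $g = (g_1,\dots,g_n)$ by writing $g = \sigma \circ h$ with $\sigma$ the ``rank-filling'' permutation and $h$ a product of $n$ single-register collapses $h^{(i)}$, where $h^{(i)}$ forces the $i$-th coordinate to its target value as a function of the state; each $h^{(i)}$ is in turn a product of $(q-1)$ elementary assignments (one per non-fixed value of that coordinate), and each elementary assignment is a permutation-conjugate of $c$, costing $2$ permutation programs plus one use of $c$. This gives roughly $n(q-1)$ collapse-blocks, each costing $\le 2\cdot(\text{perm cost}) + 1 \le 2\cdot 3\lceil\log_2 q\rceil(q-1)nq^{n-1} + O(q^n) + 1$; absorbing the $n(q-1)$ multiplicity into the constants and the $q^n$ lower-order terms yields a bound still of the form $3\lceil\log_2 q\rceil(q-1)nq^{n-1} + O(q^n)$ after renormalising the leading constant, or one re-derives the permutation bound with a slightly better constant so the singular overhead is subsumed. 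I expect the delicate point — the main obstacle — to be exactly this bookkeeping: ensuring that the conjugating permutations needed for the $O(n(q-1))$ collapse steps do not individually cost the full worst-case $O(\log q\cdot nq^{n-1})$ but can be amortised (e.g. composed and reused), so that the product structure does not multiply the leading term by an extra factor of $n(q-1)$. Handling that amortisation cleanly — perhaps by collapsing along a Gray-code-like ordering of $A^n$ so successive conjugators are cheap — is the crux; the rest is routine verification that $Y$ generates $\Tran(A^n)$ and has at most $q$ instructions per register.
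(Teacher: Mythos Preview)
Your proposal correctly identifies the crux --- amortising the cost of the conjugating permutations --- but does not supply the idea that solves it, and the sketch as written yields the wrong asymptotics. If each of your $n(q-1)$ collapse-blocks uses two ``full'' permutation programs of cost $\Theta(\lceil\log_2 q\rceil (q-1) n q^{n-1})$, the total has an extra factor of $n(q-1)$ in the leading term. Hoping for a Gray-code-style trick to make successive conjugators cheap is the right instinct, but it is not a proof; you would need to show that the \emph{total} cost of all conjugators is $O(\lceil\log_2 q\rceil (q-1) n q^{n-1})$, not merely that consecutive ones are close.

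The paper's proof does not treat the permutation generators as a black box. It builds $Y$ explicitly from (i) a single transposition $T^{(1)}=(0,1)$, (ii) a single assignment $A^{(2)}=(0\to q)$, and (iii) the $\lceil\log_2 q\rceil$ binary powers of one ``increment'' instruction $I^{(i)}$ per register. The point is that the conjugator taking $(0,1)$ to the transposition $(0,k)$ is itself a short product of increments whose length is $\lceil\log_2 q\rceil\cdot w(k)+O(1)$, where $w(k)$ is the number of nonzero coordinates of $k$. The amortisation is then just the identity $\sum_{k\in A^n} w(k)=(q-1)nq^{n-1}$: a full cyclic permutation uses each $T^{(k)}$ at most twice, giving the $2\lceil\log_2 q\rceil(q-1)nq^{n-1}$ bound for $\Sym(A^n)$, and a singular $g$ is written as a product running once through every state (one $T^{(k)}$ and one assignment per state), contributing a further $\lceil\log_2 q\rceil(q-1)nq^{n-1}$, followed by a single permutation to fix the image. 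This per-state accounting via $w(k)$ is the missing ingredient in your plan; invoking \cite{CFG14a,GR15} for the permutation part neither gives you cheap $(0,k)$-transpositions nor verifies the ``at most $q$ instructions per register'' constraint.
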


\begin{proof}
We consider the following instructions:
\begin{align*}
	T^{(1)} : \quad x_1 & \gets x_1 + \delta(x, e^0) - \delta(x, e^1),\\
	A^{(2)} : \quad x_2 & \gets x_2 + \delta(x, e^0),\\
	I^{(1)} : \quad x_1 & \gets x_1 + 1 - \delta(x, e^0) + \delta(x, (q-1)e^1),  \\
	I^{(i)} : \quad x_i & \gets x_i + 1 - \sum_{\lambda \in A} \delta(x, \lambda e^i), && (\text{for } 2 \le i \le n),
\end{align*}
where $\delta(x,y)$ denotes the Kronecker delta function, and $\lambda e^i$ is the state with $\lambda \in A$ in its $i$-th register and zero elsewhere. In order to simplify notation, we shall identify $x \in A^n$ with its lexicographic index $\sum_{i=1}^n x_i q^{i-1} \in \{0, 1, \dots, q^n - 1\}$. With this, we may write $A^{(2)} = (0 \to q)$ and $T^{(1)} = (0, 1)$. Observe that the instructions $I^{(i)}$ are permutations with the following cyclic structure: $I^{(1)}$ consists of one cycle of length $q-1$ and $q^{n-1}-1$ cycles of length $q$, while, for $2 \leq i \leq n$, the instruction $I^{(i)}$ consist of just $q^{n-1}-1$ cycles of length $q$.

Let $\rho :=\lceil \log_2(q) \rceil$ and define 
\[ 
	Y := \left\{ T^{(1)}, \ A^{(2)}, \ (I^{(i)})^{2^{j}} : 1 \le i \le n,  \ 0 \leq j \leq \rho-1 \right\}. 
\]
We shall follow several steps in order to prove that $Y$ is the required generating set.

\begin{enumerate}
\item[(i)] Any transposition $T^{(k)} := (0,k)$, with $k \in A^n$, has procedural complexity with respect to $Y$ of at most $\rho w(k) + O(1)$, where $w(k)$ is the number of non-zero coordinates of $k$.
\begin{proof}
First, we determine the procedural complexity of $(I^{(i)})^\lambda$, for $1 \leq i \leq n$ and $1 \leq \lambda \leq q-1$ with respect to $Y$. Using the binary expansion $\lambda = \sum_{j=1}^\rho \lambda_j 2^{j-1}$, $\lambda_j \in \{0,1\}$, it is clear that
\[  (I^{(i)})^\lambda = (I^{(i)})^{\lambda_1} \circ ((I^{(i)})^2)^{\lambda_1} \circ \dots \circ ((I^{(i)})^{2^{\rho-1}})^{\lambda_\rho}. \]
Thus, we need at most $\rho$ instructions from $Y$ to compute $(I^{(i)})^\lambda$.

Fix $k \in A^n$, and suppose that $1 \leq j_1,\dots,j_w \leq n$, with $w=w(k)$, are the non-zero coordinates of $k$. If $k$ is not a multiple of $q$ (i.e. $j_1 = 1$), we have
\[ T^{(k)}:=  (0,k) = \left( T^{(1)}\right)^{(I^{(1)})^{k_1 - 1} (I^{(j_2)})^{k_{j_2}} \dots (I^{(j_w)})^{k_{j_w}} }, \]
while if $k$ is a multiple of $q$, we have
\[ 	T^{(k)} = \left( T^{(1)} \right)^{(I^{(j_1)})^{k_{j_1}} \dots (I^{(j_w)})^{k_{j_w}} (I^{(1)})^{q-1} }. \] 
The result follows because $\left( (I^{(i)})^{\lambda} \right)^{-1} = (I^{(i)})^{q-\lambda}$, for any $1 \leq  \lambda \leq q-1$ and $2 \leq i \leq n$.
\end{proof}

\item[(ii)]  Any permutation in $\Tran(A^n)$ has procedural complexity with respect to $Y$ of at most $2 \rho (q-1) n q^{n-1} + O(q^{n})$.
\begin{proof}
Note that any transposition $(a,b)$ may be expressed as
\[ 	(a,b) = T^{(b)} T^{(a)} T^{(b)}. \]
Since any permutation with $r$ non-fixed points may be expressed as at most $r - 1$ transpositions, cyclic permutations of length $q^n$ have the maximum procedural complexity. In particular, if $\pi = ( a_1, a_2, \dots, a_{q^n}) \in \Sym(A^n)$, then 
\begin{align*}
	\pi &= (a_1, a_2) \dots (a_{q^n - 1}, a_{q^n})\\
	&= (T^{(a_2)} T^{(a_1)} T^{(a_2)}) (T^{(a_2)} T^{(a_3)} T^{(a_2)}) \dots (T^{(a_{q^n-1})} T^{(a_{q^n-2})} T^{(a_{q^n-1})}) (T^{(a_{q^n-1})} T^{(a_{q^n})} T^{(a_{q^n-1})})\\
	&= (T^{(a_2)} T^{(a_1)} T^{(a_3)} T^{(a_2)}) \dots (T^{(a_{q^n-1})} T^{(a_{q^n-2})} T^{(a_{q^n})} T^{(a_{q^n-1})}).
\end{align*}
In this decomposition, $T^{(a_1)}$ and $T^{(a_{q^n})}$ appear once, while every other transposition $T^{(a_s)}$, $s \not \in \{1, q^n\}$, appears twice. By step (i), $T^{(a_s)}$ requires at most $\rho w(a_s) + O(1)$ instructions from $Y$. Since 
\begin{equation}\label{sum}
\sum_{k \in A^n} w(k) = \sum_{i=1}^{n} i (q-1)^i \binom{n}{i}  = (q-1) n q^{n-1}
\end{equation}
it takes at most
\[ 2 \sum_{s=2}^{q^n-1} \left( \rho w(a_s) + O(1) \right) \le 2 \rho (q-1) n q^{n-1} + O(q^{n}) \]
instructions from $Y$ to compute $\pi$.
\end{proof}

\item[(iii)]  Any transformation in $\Tran(A^n)$ has procedural complexity with respect to $Y$ of at most $3 \rho (q-1) n q^{n-1} + O(q^{n})$.
\begin{proof}
Let $g$ be any transformation of rank $r < q^n$. Consider the partition $\ker(g) := \{ P_1,...,P_r\}$ of $A^n$ induced by the following equivalence relation: $a \sim_{g} b$ if and only if $(a)g = (b)g$. (This equivalence relation is called the \emph{kernel} of $g$). For $1 \leq i \leq r$, let $P_i = \{ p_{i,1}, \dots, p_{i,n_{i}} \}$. Depending on two cases, we shall find a transformation $h$ such that $\ker(g) = \ker(h)$, which implies that $g = h \circ \pi$ for some $\pi \in \Sym(A^n)$. 
\begin{description}
\item[Case 1:] States $0$ and $q$ are in a same set of $\ker(g)$. Without loss, assume $p_{1,1} = 0$ and $p_{1,2} = q$. Then, define
 \[ h := A^{(2)} T^{(p_{1,3})} A^{(2)} \dots T^{(p_{1,n_1})} A^{(2)} (q, p_{2,1}) T^{(p_{2,2})} A^{(2)} \dots T^{(p_{r,n_r})}A^{(2)}. \]

\item[Case 2:] States $0$ and $q$ are in distinct sets of $\ker(g)$. Without loss, assume $p_{1,1}=0$ and $p_{r,n_r} = q$. Then, define
\[ h := (0,q) T^{(p_{1,2})} A^{(2)} \dots T^{( p_{1,n_1})}  A^{(2)}  (q, p_{2,1})  T^{(p_{2,2})} A^{(2)} \dots T^{(p_{r,n_r - 1 })}  A^{(2)}  (p_{j,2},q), \]
where $j$ is the smallest index for which $p_{j,2}$ exists. (Clearly, such an index $j$ always exists because $g$ does not have full rank.) 
\end{description}
Each transposition in $h$ takes at most $ \rho w(p_{i,j}) + O(1)$ instructions and each assignment takes $O(1)$ instructions. The result follows by Equation (\ref{sum}) and Step (ii).
\end{proof}
\end{enumerate}
\end{proof}

\begin{theorem} \label{th:n+2}
There exists an $n$-complete transformation of size $n+2$ and time at most
\[ 6 \lceil \log_2(q) \rceil  (q-1) n q^{n-1} + O(q^{n}). \]
\end{theorem}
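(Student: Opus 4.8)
The plan is to build the $n$-universal transformation $f$ of size $n+2$ by packaging the generating set $Y$ from Theorem~\ref{th:MC_n+2} into the coordinate functions of $f$, using the two extra registers $n+1$ and $n+2$ as a control mechanism that selects which instruction of $Y$ is currently being applied. The point of Theorem~\ref{th:MC_n+2} is that $Y$ has at most $q$ instructions per register and total size $O(qn)$; so for each register $i \in [n]$ we need $f_i$ to encode up to $q$ different update rules, and the choice among them (and indeed among the $n+2$ registers) should be dictated by the values stored in registers $n+1, n+2$. Concretely, I would let register $n+1$ hold a pointer to ``which register to update next'' and register $n+2$ hold a pointer to ``which of the (at most $q$) instructions on that register to apply''; the coordinate function $f_i$ then reads these two pointers and, if they point at $i$, applies the designated instruction $F \in Y$ (in its form as a coordinate function $x \mapsto (x)F_i$), and otherwise acts trivially as $x_i$. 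The coordinate functions $f_{n+1}$ and $f_{n+2}$ are chosen to cycle the pointers through all $O(qn)$ possible (register, instruction)-pairs, e.g.\ $f_{n+2}: x_{n+2} \gets x_{n+2}+1$ and $f_{n+1}: x_{n+1} \gets x_{n+1} + \delta(x_{n+2}, q-1)$ or a similar odometer, so that by repeatedly applying $F^{(n+1)}$ and $F^{(n+2)}$ one can position the pointers at any desired pair in $O(qn)$ steps.

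With such an $f$, simulating a target $g \in \Tran(A^n)$ goes as follows. By Theorem~\ref{th:MC_n+2}, $g = F_1 \circ \cdots \circ F_L$ with each $F_t \in Y$ and $L \le 3\lceil \log_2 q\rceil (q-1)nq^{n-1} + O(q^n)$. To realize the $t$-th instruction $F_t$, which updates some register $i_t$ via some rule indexed $c_t \le q$, I first drive the control registers to the pair $(i_t, c_t)$ using the odometer instructions $F^{(n+1)}, F^{(n+2)}$ — this costs $O(qn)$ applications — and then apply $F^{(i_t)}$ once; by construction $F^{(i_t)}$ now performs exactly $x_{i_t} \gets (x)F_{t,i_t}$ on the first $n$ registers, i.e.\ it acts on $\pr_{[n]}(x)$ exactly as $F_t$ does on $A^n$. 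Composing these blocks over $t = 1, \dots, L$ yields $h \in S_f$ with $\pr_{[n]} \circ g = h \circ \pr_{[n]}$, so $f$ is $n$-universal. The time is $\mathbf t_f \le \sum_{t=1}^{L}(O(qn) + 1) = O(qn)\cdot L$; since $L = 3\lceil\log_2 q\rceil(q-1)nq^{n-1} + O(q^n)$, this is at worst $O(\log_2(q)\, q^2 n^2 q^{n-1})$, which is larger than the claimed bound, so the factor-of-$2$ overhead in the theorem (going from $3\lceil\log_2 q\rceil$ to $6\lceil\log_2 q\rceil$) tells me the intended construction must use a sharper, essentially amortized control scheme: the repositioning cost between consecutive instructions of the program must be made $O(1)$ on average (for instance by ordering the instructions of $Y$ cleverly and letting the control registers advance in lockstep, or by designing $f$ so that each application of a single induced instruction $F^{(i)}$ simultaneously updates register $i$ and advances a counter).

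Thus the main obstacle is not universality itself — the pointer construction above already gives that — but achieving the stated \emph{time} bound $6\lceil\log_2 q\rceil(q-1)nq^{n-1} + O(q^n)$, which forces the control overhead per instruction of the program to be $O(1)$ amortized rather than $O(qn)$. I expect the resolution to be: arrange the $O(qn)$ instructions of $Y$ in a fixed cyclic order, have the two extra registers jointly implement a counter over this order that is automatically incremented whenever any induced instruction fires, and absorb the ``skipping'' of unwanted instructions into the $O(q^n)$ lower-order term by observing that one full pass through all $O(qn)$ instructions of $Y$ costs only $O(qn)$ and that at most $O(q^n)$ such passes plus the genuine work suffice. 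Once this bookkeeping is set up, the time estimate follows by the same length bound $L$ from Theorem~\ref{th:MC_n+2}, doubled to account for the wasted instructions in each partial pass, which is exactly the source of the $6\lceil\log_2 q\rceil$ versus $3\lceil\log_2 q\rceil$ in the statement.
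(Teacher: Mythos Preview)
Your general strategy---pack the generating set $Y$ of Theorem~\ref{th:MC_n+2} into the coordinate functions of $f$ and use the two extra registers as a selector---is exactly right, and your argument does establish $n$-universality. But you explicitly concede that your control scheme gives time $O(qn)\cdot L$, not the claimed $6\lceil\log_2 q\rceil(q-1)nq^{n-1}+O(q^n)$, and your guesses about how to close that gap (cyclic orderings, auto-incrementing counters, amortising ``partial passes'') are not how the bound is obtained.

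The two missing ideas are these. First, you waste register $n+1$ by having it point to ``which register to update'': that information is already carried by \emph{which} induced instruction $F^{(i)}$ you apply. The selector therefore only needs to distinguish among the at most $\rho+1$ instructions of $Y$ on a \emph{fixed} register (where $\rho=\lceil\log_2 q\rceil$), not among $O(qn)$ pairs. The paper implements this with the single scalar $x_{n+1}-x_{n+2}\in\{0,\dots,\rho\}$: setting $f_{n+1}:x_{n+1}\gets x_{n+2}$ gives a reset (difference $0$), and $f_{n+2}:x_{n+2}\gets x_{n+2}+1$ gives an increment. So positioning the selector at any value costs at most $\rho+1$ instructions, not $O(qn)$.

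Second, the factor $2$ (from $3\rho$ to $6\rho$) is not about skipped instructions in a pass. Rather, one first \emph{groups} the $\rho$ generators $(I^{(i)})^{2^j}$ into a single step $(I^{(i)})^{\lambda}$; this cuts the program length from $3\rho(q-1)nq^{n-1}+O(q^n)$ down to $\ell\le 3(q-1)nq^{n-1}+O(q^n)$. Each grouped step is then simulated by at most $2\rho$ induced instructions: for $T^{(1)}$ or $A^{(2)}$ one does $(F^{(n+2)})^{\rho}F^{(i)}F^{(n+1)}$, and for $(I^{(j)})^{\lambda}$ with binary expansion $\lambda=\sum\lambda_s 2^{s-1}$ one interleaves $(F^{(j)})^{\lambda_1}F^{(n+2)}(F^{(j)})^{\lambda_2}\cdots F^{(n+2)}(F^{(j)})^{\lambda_\rho}F^{(n+1)}$. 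Multiplying $2\rho$ by $\ell$ gives the stated bound.
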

\begin{proof}
Let $\rho := \lceil \log_2(q) \rceil $. We consider the generating set of instructions $Y$ given in the proof of Theorem \ref{th:MC_n+2}. For each instruction in $Y$, we denote the corresponding nontrivial coordinate function in lowercase, e.g., the nontrivial coordinate function of $(I^{(1)})^2$ is $(x) i_1^2 = x_1 + 2  - 2\delta(x, e^0) + 2\delta(x, (q-1)e^1)$.

Consider the transformation $f \in \Tran \left( A^{n+2} \right)$ with coordinate functions defined as follows (with $a = n+1$ and $b = n+2$): 
\begin{align*}
	(x) f_1 &= \begin{cases}
		(x_{[n]}) i_1 & \text{if } x_a - x_b = 0 \\
		(x_{[n]}) i_1^{2} & \text{if } x_a - x_b = 1 \\
		\vdots & \vdots \\ 
		(x_{[n]})  i_1^{2^{\rho-1}} & \text{if } x_a - x_b = \rho -1 \\
		(x_{[n]})  t_1 & \text{if } x_a -  x_b= \rho,
	\end{cases} \\ \\
	(x)f_2 &= \begin{cases}
		(x_{[n]}) i_2  & \text{if } x_a - x_b =  0 \\
		(x_{[n]}) i_2^2  & \text{if } x_a - x_b = 1 \\
		\vdots & \vdots \\
		(x_{[n]}) i_2^{2^{\rho-1}}  & \text{if } x_a - x_b = \rho-1 \\
		(x_{[n]}) a_2 & \text{if } x_a - x_b = \rho,
	\end{cases} \\ \\
	(x) f_j &= \begin{cases}
		(x_{[n]}) i_j  & \text{if } x_a - x_b = 0  \\
		(x_{[n]}) i_j^2  & \text{if } x_a - x_b = 1 \\
		\vdots & \vdots \\
		(x_{[n]}) i_j^{2^{\rho-1}}  & \text{if } x_a - x_b = \rho -1 \\
		x_j & \text{if }   x_a - x_b = \rho,
	\end{cases} 
	&& (3 \le j \le n), \\ \\
	(x) f_a		&= x_b,\\ \\
	(x) f_b 	&= x_b + 1.
\end{align*}
The main idea behind the definition of $f$ is that the additional registers $a$ and $b$ work as a switch to decide which instruction the program shall use.

Let $F^{(i)} \in \Tran(A^{n+2})$ be the instruction induced by the coordinate function $f_i$. For any $g \in A^n$, we may now find $h \in S_f$ such that $\pr_{[n]} \circ g = h \circ \pr_{[n]}$. Suppose that $g = g^{(1)} \circ g^{(2)} \circ \dots \circ g^{(\ell)}$, where $g^{(k)} \in Y$. By grouping together the powers of $I^{(j)}$, we may assume that $g^{(k)} \in Y \cup \{ (I^{(j)})^{\lambda} : 1 \leq \lambda \leq q -1 \}$, so $\ell \leq 3 (q-1) n q^{n-1} + O(q^{n})$. Denote $\lambda = \sum_{i=1}^\rho \lambda_i 2^{i-1}$, with $\lambda_i \in \{0,1\}$. Let $h^{(0)} = F^{(a)}$, and for each $1 \leq k \leq \ell$, let
\begin{align*}
	h^{(k)} &= \begin{cases}
		(F^{(b)})^\rho F^{(1)}  F^{(a)}	& \text{if } g^{(k)} = T^{(1)} \\
		(F^{(b)})^\rho F^{(2)}  F^{(a)}	& \text{if } g^{(k)} = A^{(2)} \\
		(F^{(j)})^{\lambda_1} F^{(b)} (F^{(j)})^{\lambda_2} \dots F^{(b)} (F^{(j)})^{\lambda_\rho} F^{(a)} & \text{if } g^{(k)} = (I^{(j)})^{\lambda}.
		\end{cases}
\end{align*}
Therefore, we may take $h = h^{(0)} \circ h^{(1)} \circ  \dots \circ h^{(\ell)}$, which uses at most $2 \rho \ell$ of the instructions $F^{(1)}$, $\dots$, $F^{(n)}$, $F^{(a)}$, $F^{(b)}$.
This shows that $f$ is an $n$-complete transformation of size $n+2$ and time $6 \rho (q-1) n q^{n-1} + O(q^{n})$.
\end{proof}

\begin{remark}\label{rk:1}
For $q =3$ or $q=5$, there is a simpler $n$-complete transformation of size $n+2$ whose time is strictly less than the time of the $n$-complete transformation $f$ constructed in the proof of Theorem~\ref{th:n+2}. Defining 
\begin{align*}
	(x) \tilde{f}_1 &= \begin{cases}
		(x_{[n]})  i_1 & \text{if } x_a = x_b\\
		(x_{[n]})  t_1 & \text{if } x_a \ne x_b,
	\end{cases}\\ \\
	(x)\tilde{f}_2 &= \begin{cases}
		(x_{[n]}) i_2  & \text{if } x_a = x_b\\
		(x_{[n]}) a_2   & \text{if } x_a \ne x_b,
	\end{cases} \\ \\
	(x) \tilde{f}_j &=  (x_{[n]}) i_j, && ( 3 \leq j \leq n) \\ \\
	(x)\tilde{f}_a&= x_b,\\ \\
	(x) \tilde{f}_b &= \begin{cases}
			x_b &\text{if } x_a \ne x_b\\
			x_b + 1 &\text{if } x_a = x_b,
			\end{cases}
\end{align*}
we obtain an $n$-complete transformation $\tilde{f}$ of size $n+2$ and time $\mathbf{t}_{\tilde{f}}(n) = 3 (q-1) n q^{n} + O(q^n)$. 

Observe that, for $q = 7$ or $q \geq 9$, we have $\mathbf{t}_{\tilde{f}}(n) > \mathbf{t}_{f}(n)$, while, for $q \in \{2, 4, 6, 8\}$, $\mathbf{t}_{\tilde{f}}(n) = \mathbf{t}_f(n)$. However, for $q=3$ or $q=5$, we have $\mathbf{t}_{\tilde{f}}(n) < \mathbf{t}_f(n)$; Table \ref{ta:time} compares explicitly the times of $\tilde{f}$ and $f$.
\begin{table}[!h]
\setlength{\tabcolsep}{8pt}
\renewcommand{\arraystretch}{1.6}
\centering
\begin{tabular}{|l|cc|}
\hline
 & $\mathbf{t}_{\tilde{f}}(n)$ & $\mathbf{t}_f(n)$ \\ \hline
 $q=3$ & $6 n 3^{n} + O(3^n)$  &  $8 n 3^{n} + O(3^{n})$   \\
 $q=5$ & $60 n 5^{n-1} + O(5^n)$ & $72 n 5^{n-1} + O(5^{n})$  \\ \hline
\end{tabular}
\caption{Times of $\tilde{f}$ and $f$.} 
\label{ta:time}
\end{table} 
\end{remark}


\subsection{Complete transformations with minimal time}

We now turn to the problem of minimising the time of an $n$-complete transformation. 

\begin{theorem} \label{th:time_2n}
For all $n \ge 1$, there is an $n$-complete transformation of time $2n$.
\end{theorem}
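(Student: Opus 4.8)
The plan is to mimic the construction of Example~\ref{ex:universal}, but to compress the $Q^n$-bit switch (which cost $O(Q^n) = O(q^{nq^n})$ registers and, more importantly, $Q^n$ steps to reset) into something whose reset costs only $O(n)$ steps. The key observation is that the dominant term in the time of Example~\ref{ex:universal} comes entirely from Step~2, ``turn all switches off'', which is linear in the number of switches. If instead the switch state that selects the transformation $p^{(s)}$ to be simulated is stored not as a one-hot vector of length $Q^n$ but as a single word from a one-error-correcting code, then selecting a new value $s$ amounts to flipping the coordinates in which the current codeword differs from the target codeword, and this can be done register-by-register; since each coordinate of $A^m$ can be reset in $O(1)$ steps by a dedicated ``increment'' instruction, resetting the whole switch costs $O(m - 2n) = O(n) + O(\log_q |\Tran(A^n)|)$ steps. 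Note $\log_q|\Tran(A^n)| = q^n \log_q(q^n) = n q^n$, so even a switch of length $\Theta(nq^n)$ would give time $\Theta(nq^n)$, which is too big; the point of using a \emph{code} rather than a plain binary index is presumably a subtler accounting, so the hard part will be to arrange that the switch has length $q^n + O(n)$ (not $nq^n$) while still being able to address all $\approx q^{nq^n}$ transformations and while supporting the $O(1)$-per-register reset that makes the total reset time $q^n + O(n)$.

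Concretely, I would proceed as follows. First, fix an enumeration $p^{(0)}, \dots, p^{(N-1)}$ of $\Tran(A^n)$ with $N = q^{nq^n}$, and fix a one-error-correcting code $C \subseteq A^{\mu}$ of size at least $N$, so by the sphere-packing/Gilbert--Varshamov--type bound we may take the redundancy $\mu - \log_q N$ to be small; the claim of the theorem forces $\mu = q^n + O(n)$, so I would choose $C$ to be (an interleaving/shortening of) a Hamming-type code over $A$ of length $q^n + O(n)$. The extra registers of $A^m$ then consist of: the $n$ ``copy'' registers (as in Example~\ref{ex:universal}), the $\mu$ registers holding the current switch codeword $c \in A^\mu$, and possibly $O(1)$ auxiliary/control registers, so $m = 2n + \mu + O(1) = q^n + O(n)$. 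I define the coordinate functions $f_i$ for $i \in [n]$ to behave as follows: if the $\mu$ switch registers hold a state within Hamming distance $1$ of some codeword $c = c(s) \in C$, decode to recover $s$ and update register $i$ according to $p^{(s)}_i$ applied to the copy registers; otherwise act trivially. The point of the error-correcting property is that a partially-updated switch (distance $\le 1$ from a codeword) still decodes unambiguously, so that \emph{during} a reset the behaviour on $[n]$ is never accidentally triggered for the wrong $s$ — wait, more carefully, I would instead gate the update of $[n]$ on the switch lying \emph{exactly} on a codeword, or I would use the distance-$1$ slack only to keep the intermediate decodings consistent; I would sort out whichever invariant makes the simulation bookkeeping cleanest. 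The remaining coordinate functions copy the input ($f_{n+i}$ reads register $i$ for $i \in [n]$), and each switch register $f_{2n+k}$ is an increment $x_{2n+k} \gets x_{2n+k} + 1$, so that any switch register can be cycled to any desired value in at most $q-1$ steps.

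To verify universality, given a target $p^{(s)}$ and an arbitrary input state $x \in A^m$, the simulating program is: (Step~1) copy the first $n$ registers via $F^{(n+1)} \cdots F^{(2n)}$, $n$ instructions; (Step~2) drive the switch to the codeword $c(s)$ by, for each of the $\mu$ switch registers, applying the increment instruction $F^{(2n+k)}$ the appropriate number ($\le q-1$) of times — here the ``memoryless complexity'' counts each application separately, so this costs at most $(q-1)\mu$ instructions, but because we can group the powers we should think of it as $\mu$ ``moves'' each of bounded cost; in any case this is $q^n + O(n)$ instructions; (Step~3) apply $F^{(1)} \cdots F^{(n)}$ to compute $p^{(s)}$ on the (now correct) copy registers, $n$ instructions. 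Then $h := (F^{(n+1)}\cdots F^{(2n)})\circ(\text{reset to }c(s))\circ(F^{(1)}\cdots F^{(n)})$ satisfies $\pr_{[n]}\circ p^{(s)} = h\circ\pr_{[n]}$, and $t_f(p^{(s)}) = q^n + O(n)$. Taking the max over $g \in \Tran(A^n)$ gives $\mathbf{t}_f = q^n + O(n)$ as required. The main obstacle, and the step I would be most careful with, is the parameter count: making the increments cost only $O(1)$ per register is easy, but guaranteeing that $\mu$, the length of a code over $A$ of size $q^{nq^n}$ with the decoding/addressing properties needed, really is $q^n + O(n)$ and not, say, $\Theta(nq^n)$; this is where the choice of code (and a clean statement of exactly which decoding guarantee the coordinate functions of $f$ need) does the real work, and I would want to state the required coding-theoretic lemma explicitly — an $A$-ary one-error-correcting code of length $q^n + O(n)$ and size $\ge q^{nq^n}$ — and check it against the Hamming bound before committing to the construction.
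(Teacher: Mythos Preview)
Your proposal has a genuine gap at exactly the point you flag as ``the main obstacle'': the parameter count cannot work. You want a code $C \subseteq A^\mu$ of size at least $N = q^{nq^n}$ with $\mu = q^n + O(n)$, but $|A^\mu| = q^{q^n + O(n)} \ll q^{nq^n}$, so no such code exists even before asking for error correction. You noticed this yourself (``$\log_q N = nq^n$'') but then hoped the coding-theoretic lemma would somehow give $\mu = q^n + O(n)$; it cannot. With your scheme the switch must have length at least $nq^n$, and your Step~2 (drive every switch register to its target value) then costs $\Theta(nq^n)$, missing the theorem by a factor of $n$.

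The idea you are missing is twofold. First, do not index \emph{transformations} (of which there are $Q^n$ with $Q = q^{q^n}$); index \emph{coordinate functions} $A^n \to A$ (of which there are only $Q$). A transformation $g = (\gamma^{(k_1)},\dots,\gamma^{(k_n)})$ is then simulated by selecting and applying one coordinate function at a time, which costs $O(1)$ per coordinate, i.e.\ $O(n)$ total, once the switch machinery is in place. Second, do not try to drive the switch to a prescribed codeword. Use a \emph{systematic} binary Hamming code of length $\hat n = Q + \lceil\log_2 Q\rceil + 1$: leave the $Q$ data positions at whatever values the input happens to give them, and compute only the $\lceil\log_2 Q\rceil + 1$ parity bits, packed $\lfloor\log_2 q\rfloor$ bits per $A$-register, so this costs $r \approx (\log_2 Q)/\lfloor\log_2 q\rfloor = q^n + O(1)$ instructions. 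Now the switch sits at \emph{some} codeword; to select $\gamma^{(k)}$ you flip data position $k$ (one instruction), the decoder in $f_i$ reads the error location $k$ and applies $\gamma^{(k)}$, and then you flip position $k$ back. The total time is $n + r + 3n = q^n + O(n)$. The role of the code is thus not to compress the index but to let the \emph{error position} serve as the index while the ambient codeword is arbitrary and never reset.
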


\begin{proof}
Let $Q = q^{q^n}$ and enumerate the functions $A^n \to A$ as $\phi^1, \dots, \phi^Q$. We let $m = n + Qn$ and $[m] = [n] \cup ([n] \times [Q])$. Let $f \in \Tran(A^m)$ be defined as
\begin{alignat*}{2}
	(x)f_v &= \sum_{s = 1}^Q x_{vs} &\qquad& v \in [n]
	\\
	(x)f_{vs} &=  (x_{[n]})\phi^s - \sum_{t \ne s} x_{vt} &\qquad& vs \in [n] \times [Q].
\end{alignat*}
Then let $g = (\phi^{i_1}, \dots, \phi^{i_n}) \in \Tran(A^n)$. It can be simulated as follows.
\begin{description}
	\item[Step 1.] For $v=1$ to $n$, compute the value of $\phi^{i_v}$: $F^{(vi_v)}$.
	
	\item[Step 2.] For $v=1$ to $n$, copy that value into $x_v$: $F^{(v)}$.
\end{description}
\end{proof}

\begin{figure}
\centering
\begin{tikzpicture}[scale=1.5]
\begin{scope}
	\node[state] 	(1) at (0,0) {$1$};
	\node[state] 	(2) at (1,0) {$2$};
	\node 			(3) at (2,0) {$\dots$};
	\node[state] 	(n) at (3,0) {$n$};
	
	\draw (-0.5,1) -- +(4,0) node[below left, font=\small] {$[n]$: First $n$ registers} -- +(4,-1.5) -- +(0,-1.5) -- +(0,0);
\end{scope}

\begin{scope}
	\begin{scope}[xshift = -3.2cm, yshift = -1.2cm]
		\node[state] 	(11) at (0,0) {$11$};
		\node 			(12) at (1*0.8,0) {$\dots$};
		\node[state] 	(1s) at (2*0.8,0) {$1s$};
		\node 			(1t) at (2*0.8,-0.5) {$(x_{[n]})\phi^s - \sum_t x_{(t,1)}$};
		\node 			(13) at (3*0.8,0) {$\dots$};
		\node[state] 	(1q) at (4*0.8,0) {$1Q$};
	\draw (-0.4, 0.4) -- +(4,0) -- +(4,-1.2) -- +(0,-1.2) -- +(0,0);
	\draw[very thick, -latex] (2*0.8,0.4) |- (1);
	\end{scope}
	
	\begin{scope}[xshift = -0.6cm, yshift = -2.7cm]
		\node[state] 	(21) at (0,0) {$21$};
		\node 			(22) at (1*0.8,0) {$\dots$};
		\node[state] 	(2s) at (2*0.8,0) {$2s$};
		\node 			(2t) at (2*0.8,-0.5) {$(x_{[n]})\phi^s - \sum_t x_{(t,2)}$};
		\node 			(23) at (3*0.8,0) {$\dots$};
		\node[state] 	(2q) at (4*0.8,0) {$2Q$};
	\draw (-0.4, 0.4) -- +(4,0) -- +(4,-1.2) -- +(0,-1.2) -- +(0,0);
	\draw[very thick, -latex] (2*0.8,0.4) -- (2);
	\end{scope}

	\begin{scope}[xshift = 3cm, yshift = -1.2cm]
		\node[state] 	(n1) at (0,0) {$n1$};
		\node 			(n2) at (1*0.8,0) {$\dots$};
		\node[state] 	(ns) at (2*0.8,0) {$ns$};
		\node 			(2t) at (2*0.8,-0.5) {$(x_{[n]})\phi^s - \sum_t x_{(t,n)}$};
		\node 			(n3) at (3*0.8,0) {$\dots$};
		\node[state] 	(nq) at (4*0.8,0) {$nQ$};
	\draw (-0.4, 0.4) -- +(4,0) -- +(4,-1.2) -- +(0,-1.2) -- +(0,0);
	\draw[very thick, -latex] (2*0.8,0.4) |- (n);
	\end{scope}
\end{scope}
\end{tikzpicture}
\caption{The $n$-complete transformation of Theorem \ref{th:time_2n}}
\end{figure}
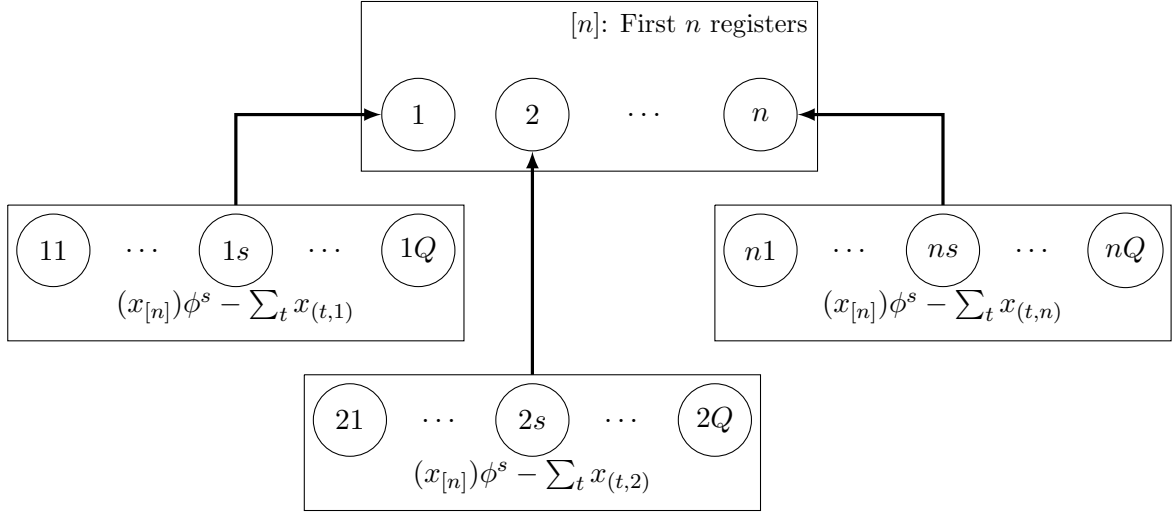

In \cite{B18} it is shown that for any $n$ there exists $h \in \Tran(A^n)$ whose procedural complexity (with respect to $\bar{\mathcal{I}}(A^n)$) is at least $\lfloor 4n/3 \rfloor$, if  $q \geq 2$, and at least $\lfloor 3n/2 - \log_q(n) \rfloor$, if $q \geq 4$. It is clear that the time of simulation $\mathbf{t}_f(n)$ of any $n$-complete transformation $f$ is at least the maximal procedural complexity of any transformation of $\Tran(A^n)$. This gives us some lower bounds for the time of any $n$-complete transformation. In \cite{B18} it is also proved that any transformation $h \in \Tran(A^n)$ can be simulated by a transformation $f \in \Tran(A^m)$ for some $m \geq n$ with a time $\lceil 3n/2 + \log_q(n) \rceil$. However, it is clear that the time to simulate all transformations of $A^n$ is larger than the time to simulate a unique transformation of $A^n$.
	
\begin{conjecture}
The time of any $n$-complete transformation is at least $2n$, for any $n$ and $q$.
\end{conjecture}

\section{Sequential simulation of transformations} \label{sec:seq}

\begin{definition}[Sequential simulation]
Let $m \geq n \geq 1$. We say that $f : A^m \to A^m$ \emph{sequentially simulates} $g^{(1)}, \dots, g^{(\ell)} \in \Tran(A^n)$ if there exist $h^{(1)}, \dots, h^{(\ell)} \in S_f \subseteq \Tran(A^m)$ such that, for any $1 \leq i \leq \ell$,
\[  \pr_{[n]} \circ g^{(i)}  =  h^{(1)} \circ \dots \circ h^{(i)} \circ \pr_{[n]}. \] 
The \emph{sequential time of simulation}, denoted by $\text{st}_f(g^{(1)}, \dots, g^{(\ell)})$, is the procedural complexity of $h^{(1)} \circ \dots \circ h^{(\ell)}$ with respect to $\{F^{(1)}, \dots, F^{(m)}\}$.   
\end{definition}

\begin{definition}[Sequentially $n$-complete]
A transformation of $A^m$ is called \emph{sequentially $n$-complete} if it may sequentially simulate any finite sequence in $\Tran(A^n)$. 
\end{definition}

Define the \emph{maximal and minimal sequential times of $f$}, denoted by $ \mathbf{st}^{\text{max}}_{f}$ and $\mathbf{st}^{\min}_{f}$, respectively, as follows:
\begin{align*}
	\mathbf{st}^{\max}_{f} &= \max \left\{  \text{st}_f(g^{(1)}, \dots, g^{(q^{nq^n})}) : g^{(i)} \neq g^{(j)} \text{ for } i \neq j  \right\}, \\
	\mathbf{st}^{\min}_{f}& = \min \left\{  \text{st}_f(g^{(1)}, \dots, g^{(q^{nq^n})}) :g^{(i)} \neq g^{(j)} \text{ for } i \neq j \right\}.
\end{align*}
 As the sequences considered in the above definitions must include each transformation in $\Tran(A^n)$ exactly once, the relevant aspect when calculating maximal and minimal sequential times is the order in which the transformations of $A^n$ appear in the sequence.   

\begin{example} \label{ex:seq_complete}
The $n$-complete transformation of Example \ref{ex:complete} is in fact a sequentially $n$-complete transformation. Let $g^1 = p^{s_1}, \dots, g^\ell = p^{s_\ell}$, then this sequence can be simulated as follows.
\begin{description}
	\item[Step 1.] Make a copy of the first $n$ registers: $F^{(1', \dots, n')}$.
	
	\item[Step 2.] Turn all switches off: $F^{(a_1, \dots, a_\mathcal{T})}$.
	
	\item[Step 3.] For $j$ from $1$ to $\ell$ do:
	\begin{description}
		\item[Step 3.1.] Turn the switch on: $F^{(b_{s_j})}$.
		
		\item[Step 3.2.] Compute $p^{s_j}$: $F^{(1, \dots, n)}$.
		
		\item[Step 3.3.] Turn the switch off: $F^{(a_{s_j})}$.
	\end{description}
\end{description}
The maximal sequential time of $f$ then satisfies
$$
	\mathbf{st}^{\max}_{f} \le (n + 3 + o(1)) \mathcal{T}.
$$
By considering a sequence where $g^{j-1}$ and $g^j$ only differ by one coordinate function (i.e. a Gray code, which we shall use later on), Step 3.2 can be simplified by updating only the register corresponding to the differing coordinate function. The minimal sequential time of $f$ then satisfies
$$
	\mathbf{st}^{\min}_{f} \le  (4 + o(1))\mathcal{T}.
$$
\end{example}
 
\subsection{Sequentially complete transformations of minimal size}

\begin{theorem} \label{th:space_2n}
The size of a sequentially $n$-complete transformation is at least $2n$.
\end{theorem}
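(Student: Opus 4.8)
The plan is to show that any complete $n$-universal transformation $f \in \Tran(A^m)$ must satisfy $m \geq 2n$ by exhibiting a sequence of transformations of $A^n$ that cannot be sequentially simulated when $m < 2n$. The key intuition is that sequential simulation must work for \emph{every} input state in $A^m$ simultaneously, and once we have committed to a sequence $h^{(1)} \circ \dots \circ h^{(i)}$ realizing $g^{(i)}$ on the first $n$ registers, the same composite (extended by $h^{(i+1)}$) must realize $g^{(i+1)}$. In particular, consider the sequence consisting of all constant transformations of $A^n$: for each $c \in A^n$, let $g_c \in \Tran(A^n)$ be the map sending everything to $c$. After simulating $g_c$, the first $n$ registers of every state in $A^m$ are equal to $c$; so the composite transformation $H_c := h^{(1)} \circ \dots \circ h^{(i_c)} \in S_f$ has the property that $\pr_{[n]} \circ g_c = H_c \circ \pr_{[n]}$, i.e. $(x) H_c$ agrees with $c$ on the first $n$ coordinates for all $x \in A^m$. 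Thus $\Ima(H_c) \subseteq \{c\} \times A^{m-n}$, so $\rk(H_c) \leq q^{m-n}$.

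Next I would use the fact that the sequence must be processed \emph{in order} and cumulatively: if $g_{c}$ is followed later by $g_{c'}$ in the chosen enumeration, then $H_{c'} = H_c \circ (\text{some element of } S_f)$, and more importantly $H_{c'}$ must map every state to something with first $n$ coordinates equal to $c'$, while $H_c$ has already collapsed the first $n$ coordinates to the constant $c$. Concretely, after simulating $g_c$, the current "effective state space" — the image of $H_c$ — lies in the slice $\{c\}\times A^{m-n}$, which has only $q^{m-n}$ points. When we then go on to simulate $g_{c'}$, we need $H_{c'}$ to send every one of the original $q^m$ states to the correct constant $c'$ on registers $1,\dots,n$; but more subtly, we can iterate: over the whole sequence of all $q^n$ constant maps, the reachable states keep getting funneled through slices of size $q^{m-n}$. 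The contradiction I aim to extract is information-theoretic: to be able to produce $q^n$ distinct required outputs (one per constant) from a process that at the relevant moments is confined to a set of size $q^{m-n}$, combined with the need to simulate \emph{all} transformations (not just constants) afterwards in the same cumulative chain, forces $q^{m-n} \geq q^n$, hence $m \geq 2n$. Making this precise likely uses that between consecutive constant simulations one must be able to "recover" enough information to still simulate arbitrary transformations later — so the registers $n+1,\dots,m$ must be able to encode a full copy of the original register values.

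The cleanest route: pick the sequence so that a constant map $g_{c}$ is immediately followed by the identity-like task of simulating all non-constant transformations; after $H_c$ the first $n$ registers carry no information about the original input, so every later $h^{(i)}$ that needs to make register $1,\dots,n$ depend on the original input must draw that information from registers $n+1,\dots,m$. Since we need, for arbitrary $g \in \Tran(A^n)$ appearing later in the sequence, that $(x)g$ on the first $n$ coordinates is an \emph{arbitrary} function of $(x)\pr_{[n]}$ — and this function is composed after $H_c$ which has destroyed $\pr_{[n]}$ on the input — the composite from the start must still determine $(x)\pr_{[n]}$, so the map $x \mapsto (x) H_c$ restricted to reconstructing the original first-$n$ coordinates must be injective in the relevant sense on the $q^n$ fibers, forcing $q^{m-n} \geq q^n$.

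The main obstacle I anticipate is handling the bookkeeping of "which information survives where" rigorously: instructions update one register at a time and can mix registers $1,\dots,n$ with $n+1,\dots,m$ arbitrarily, so I must argue carefully that \emph{no} clever interleaving lets $m-n < n$ registers carry, across a constant-map simulation, both the constant's effect and a full encoding of the original input needed for subsequent simulations. I would formalize this by choosing a sequence that alternates a constant map with a transformation whose output on registers $1,\dots,n$ equals the \emph{original} input (the identity on $A^n$), and tracking that the composite $H$ realizing the identity after realizing a constant must satisfy $(x)H \cdot \pr_{[n]} = (x)\pr_{[n]}$ for all $x$, while factoring through $H_c$ whose image lies in a $q^{m-n}$-set; injectivity of $\pr_{[n]}$ on $A^m$ restricted to each fiber then yields $q^n \leq q^{m-n}$. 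The delicate point is ensuring the factorization $H = H_c \circ (\cdots)$ genuinely holds in the cumulative model — which it does by the definition of sequential simulation — and that $H$ genuinely recovers $\pr_{[n]}$ exactly, which follows from taking $g^{(i)} = \id_{A^n}$ at that stage.
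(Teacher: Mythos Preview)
Your final approach—simulate a constant map $g_c$ followed immediately by the identity—is correct and yields the bound cleanly. If $h^{(1)},h^{(2)}\in S_f$ realize this two-term sequence, then $\pr_{[n]}\circ g_c = h^{(1)}\circ\pr_{[n]}$ forces $\Ima(h^{(1)})\subseteq\{c\}\times A^{m-n}$, while $\pr_{[n]} = h^{(1)}\circ h^{(2)}\circ\pr_{[n]}$ forces $h^{(1)}$ to separate any two states whose first $n$ coordinates differ. Hence $q^n\le\lvert\Ima(h^{(1)})\rvert\le q^{m-n}$ and $m\ge 2n$. Your earlier paragraphs about enumerating all constants or alternating repeatedly are unnecessary; the two-term sequence already suffices.

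The paper reaches the same conclusion via a different test sequence: it takes the $q^n$ transformations $g^{(k)}$ sending $k\mapsto 1$ and everything else to $0$. Sequentiality then forces $h^{(0)}$ to separate all $q^n$ states of $A^n$ (embedded in $A^m$), and since $g^{(0)}$ has image $\{0,1\}$ with a unique preimage of $1$, at least $q^n-1$ of those images must already differ on registers $n{+}1,\dots,m$, giving $q^{m-n}\ge q^n-1$ and hence $m\ge 2n$. Both arguments rest on the same information-preservation principle—that the auxiliary registers must carry a faithful copy of the original first $n$ coordinates across a ``collapsing'' step—but your constant-then-identity witness is shorter and yields the inequality $q^{m-n}\ge q^n$ directly, without the $-1$ slack.
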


\begin{proof}
Let $f$ be a sequentially $n$-complete transformation of size $m < 2n$. Consider the sequence $g^1, g^2$ of transformations in $\Tran(A^n)$, where $(x)g^1 = 0$ and $(x)g^2 = x$ for all $x \in A^n$. Let $h^1$ simulate $g^1$ and $h^2 \circ h^1$ simulate $g^2$. We have
\[
	\rk(h^1) \ge \rk(h^2 \circ h^1) \ge \rk(g^2) = q^n.
\]
However, since $h^1$ simulates $g^1$, we have
\[
	\rk(h^1) \le q^{m-n} \rk(h^1_{[n]}) < q^n \rk(h^1_{[n]}) = q^n \rk(g^1) = q^n,
\]
which is the desired contradiction.
\end{proof}

%

The exact minimum size of a sequentially $n$-complete transformation is an open problem. Instead, we study the time of a sequentially $n$-complete transformation of size $2n+2$ or $2n+3$.

\begin{theorem} \label{th:space_2n+2}
Let $\lvert A \rvert = q \geq 2$. Then, there exists a sequentially $n$-complete transformation $\hat{f}$ of size $m$ and $\mathbf{st}^{\max}_{\hat{f}}(n)$ as given in Table \ref{ta:sequentially complete}.

\begin{table}[!h]
\setlength{\tabcolsep}{8pt}
\renewcommand{\arraystretch}{1.6}
\centering
\begin{tabular}{|l|cc|}
\hline
 & $m$ & $\mathbf{st}^{\max}_{\hat{f}}(n)$ \\ \hline
 $q=2$ & $2n+3$ & $3 n 2^{n2^n + n} +  O( 2^{n2^n+n})$ \\
 $q=3$ or $q=5$ &  $2n+2$ & $6 (q-1) n q^{nq^n + n -1} +  O( q^{nq^n+n})$\\ 
 $q=4$ or $q \geq 6$ & $2n+2$ & $6 \lceil \log_2(q) \rceil (q-1) n q^{nq^n + n -1} +  O( q^{nq^n+n})$ \\ \hline
\end{tabular}
\caption{Sequentially $n$-complete transformations.} 
\label{ta:sequentially complete}
\end{table} 
\end{theorem}

\begin{proof}
Assume first that $q=4$ or $q \geq 6$, and let $\rho := \lceil \log_2 q \rceil$. Consider the $n$-complete transformation $f$ of size $n+2$ constructed in the proof of Theorem \ref{th:n+2}. Now, define the coordinate functions of $\hat{f}$ by
\begin{align*}
	(x)\hat{f}_i &= \begin{cases}
					 x_{i'} & \text{if } x_b - x_a = \rho + 1\\
					(x_{[n+2]}) f_i & \text{otherwise},
					 \end{cases} &&  (1 \leq i \leq n), \\ \\
	(x)\hat{f}_a &= x_b, \\ \\
	(x)\hat{f}_b&= x_a + 1, \\ \\
	(x)\hat{f}_{[n]'} &= x_{[n]}.
\end{align*}
Intuitively, registers in $[n]'$ maintain a copy of the original configuration of registers in $[n]$; again, registers $a = n+1$ and $b = n+2$ indicate which coordinate function to use but now the position $\rho + 1$ indicates that $f_i$, $1 \leq i \leq n$, must copy back the original values of the input from registers in $[n]'$. 

Let $F^{(i)}$ and $\hat{F}^{(i)}$ be the instructions induced by the coordinate functions $f_i$ and $\hat{f}_i$, respectively. Suppose that we want to sequentially simulate $g^{(1)}, \dots , g^{(\ell)} \in \Tran(A^n)$. Since $f$ is $n$-complete, there exist $h^{(1)}, \dots, h^{(\ell)} \in S_f = \langle F^{(1)}, \dots, F^{(n+2)} \rangle$ such that $\pr_{[n]} \circ g^{(i)} = h^{(i)} \circ \pr_{[n]}$. For $1 \leq i \leq \ell$, define $\hat{h}^{(i)} \in S_{\hat{f}}$ by replacing every instruction $F^{(k)}$ in $h^{(i)}$  by $\hat{F}^{(k)}$. Let 
\[ 
	C := \hat{F}^{(1', \dots, n')} \ \text{ and } \ B := (\hat{F}^{(b)})^{\rho + 1} ( \hat{F}^{(1, \dots, n)} ) \hat{F}^{(a)}. 
\]
Then, for every $1 \leq i \leq \ell$, we have
\[ 
	\pr_{[n]} \circ g^{(1)} \circ \dots \circ g^{(i)} =  (\hat{F}^{(a)} C \hat{h}^{(1)} ) ( B \hat{h}^{(2)} ) (B  \hat{h}^{(3)}) \dots (B \hat{h}^{(i)}) \circ \pr_{[n]}. 
\]

By Theorem \ref{th:n+2}, each $\hat{h}^{(i)}$ has procedural complexity of at most $6 \rho (q-1) n q^{n-1} + O(q^{n})$. Hence, sequences of length $\ell = q^{nq^n}$ have maximal sequential time of $6 \rho (q-1) n q^{nq^n + n -1} +  O( q^{nq^n+n})$.

For $q=3$ or $q=5$, let $\rho :=1$, and use the above construction of $\hat{f}$ with $\tilde{f}$, as in Remark \ref{rk:1}, instead of $f$.

The proof for $q=2$ is very similar. The main difference is that, as the first and second coordinate functions must choose among three possibilities ($i_1$, $t_1$, or $x_{1'}$, and $i_2$, $a_2$, or $x_{2'}$, respectively), a switch consisting of two registers does not suffice; however, a switch of three registers $a, b, c$ is enough for our purposes. More formally, we now define the transformation $\hat{f} \in \Tran(A^{2n+3})$ by
\begin{align*}
	(x)\hat{f}_i &= \begin{cases}
					 x_{i'} & \text{if } x_b \ne x_c\\
					(x_{[n+2]}) f_i & \text{otherwise},
					 \end{cases} &&  (1 \leq i \leq n), \\ \\
	(x)\hat{f}_a&= x_b, \\ \\
	(x)\hat{f}_b&= x_b + 1, \\ \\
	(x)\hat{f}_c&= x_b, \\ \\
	(x)\hat{f}_{[n]'} &= x_{[n]}.
\end{align*}
Using a similar notation as above, define
\[ 
	C := \hat{F}^{(1', \dots, n')} \ \text{ and } \ B := \hat{F}^{(c,b)} ( \hat{F}^{(1, \dots, n)} ) \hat{F}^{(b)}. 
\]
For $1 \leq i \leq \ell$, define $\hat{h}^{(i)} \in S_{\hat{f}}$ by replacing every instruction $F^{(k)}$ in $h^{(i)}$ by $\hat{F}^{(k)}$ for $k \le n+1$ and by replacing every instruction $F^{(b)}$ in $h^{(i)}$ by $\hat{F}^{(b, c)}$. Then, for every $1 \leq i \leq \ell$, we have
\[ \pr_{[n]} \circ g^{(1)} \circ \dots \circ g^{(i)} =  (\hat{F}^{(a)} C \hat{h}^{(1)} ) ( B \hat{h}^{(2)} ) (B  \hat{h}^{(3)}) \dots (B \hat{h}^{(i)}) \circ \pr_{[n]}. \]
The time analysis is similar as before.
\end{proof}


\subsection{Sequentially complete transformations with minimal sequential times}

\begin{theorem} \label{th:seq_complete_min_time}
Let $f$ be a sequentially $n$-complete transformation. Then, $q^{nq^n} \leq  \mathbf{st}^{\min}_{f}$. Conversely, there exists a sequentially $n$-complete transformation $f$ such that $\mathbf{st}^{\min}_{f} = (1 + o(1)) q^{nq^n}$.
\end{theorem}

Clearly, one always needs at least $q^{nq^n}$ updates to compute any sequence of length $q^{nq^n}$, so $q^{nq^n} \leq  \mathbf{st}^{\min}_{f}$. In order to prove the upper bound in the theorem, we need several preliminary results about Gray codes. 

As usual, let $\lvert A \rvert = q$ and $n \geq 2$. An $(n,q)$-\emph{Gray code} is an ordering $(a^{(0)},\dots,a^{(q^n-1)})$ of the states in $A^n$ such that two consecutive states differ by only one coordinate: $d_H(a^{(i - 1 \mod q^n)}, a^{(i)}) = 1$ for all $0 \le i \le q^n-1$, where $d_H$ is the Hamming distance. For any Gray code $G = (a^{(0)}, \dots, a^{(q^n-1)})$, we denote the sequence $C(G) = (c^{(0)}, \dots, c^{(q^n-1)}) \in [n]^{q^n}$ where $c^{(i)} \in [n]$ is the coordinate in which $a^{(i-1 \mod q^n)}$ and $a^{(i)}$ differ. We also denote by $\mathrm{S}$ the successor function of $G$, i.e. $\mathrm{S} \in \Tran(A^n)$ and $(a^{(i)}) \mathrm{S} = a^{(i + 1 \mod q^n)}$.

We first give an example of how to achieve a minimum time of $(2 + o(1))\mathcal{T}$. We view $\Tran(A^n)$ as $[Q]^n$ and use an $(n,Q)$-Gray code to list the functions in $\Tran(A^n)$. We thus denote $\Tran(A^n) = \{g^0, \dots, g^{\mathcal{T}-1}\}$ so that two consecutive functions $g^{i-1}$ and $g^i$ only differ by the $c^i$-th local function: $g^i_{c^i} \ne g^{i-1}_{c^i}$. Independently, we enumerate $\{0, \dots, \mathcal{T} - 1\}$ according to an $(nq^n, q)$-Gray code $\hat{G}$ with $C(\hat{G}) = (\hat{c}^{(0)}, \dots, \hat{c}^{(q^n-1)})$ and successor function $\hat{\mathrm{S}}$. The transformation $f$ is defined as follows. We let $m = 2n + Q + 2$ and $[m] = [n] \cup [n]' \cup \Sigma \cup \{a,b\}$, and we identify $x_\Sigma = (x_{k_1}, \dots, x_{k_{nq^n}})$ with its index in the Gray code $\hat{G}$. Then
\begin{align*}
	(x)f_{[n]} &= (x_{[n]'})g^{x_\Sigma},
	\\ \\
	(x)f_{[n]'} &= x_{[n]}, 
	\\ \\
	(x)f_\Sigma &=	\begin{cases}
	(x_\Sigma) \hat{\mathrm{S}} & \text{if } x_a \ne x_b\\
	0 &\text{if } x_a = x_b,
	\end{cases}
	\\ \\
	(x)f_a &= x_b,
	\\ \\
	(x)f_b &= x_a + 1.
\end{align*}
This is illustrated in Figure \ref{fig:complete_universal_min_time}.

The sequence $g^0, \dots, g^{\mathcal{T}-1}$ is then simulated as follows.
\begin{description}
	\item[Step 1.] Initialisation.
	\begin{description}
		\item[Step 1.1.] Copy $x_{[n]}$ into $[n]'$: $F^{(1', \dots, n')}$.
		
		\item[Step 1.2.] Turn the switch off and reset the counter: $F^{(a, k_1, \dots, k_{nq^n})}$.
		
		\item[Step 1.3.] Turn the switch on: $F^{(b)}$.
	\end{description}
	
	\item[Step 2.] For $0 \le i \le \mathcal{T}-1$
	\begin{description}
		\item[Step 2.1.] Compute $g^i$: $F^{(\hat{c}^{(i)})}$.
		
		\item[Step 2.2.] Increment the counter: $F^{(k_{\hat{c}^{(i)}})}$.
	\end{description}	
\end{description}

\begin{figure}
	\centering
	\begin{tikzpicture}[scale=1.5]
	\begin{scope}
	\node[state] 	(1) at (0,0) {$1$};
	\node[state] 	(2) at (1,0) {$2$};
	\node 			(3) at (2,0) {$...$};
	\node[state] 	(n) at (3,0) {$n$};
	
	\draw (-0.5,1) -- +(4,0) node[below left, font=\small] {$[n]$: First $n$ registers} -- +(4,-1.5) -- +(0,-1.5) -- +(0,0);
	\end{scope}

	\begin{scope}[yshift = -2cm]	
	\node[state] 	(11) at (0,0) {$1'$};
	\node[state] 	(22) at (1,0) {$2'$};
	\node 			(33) at (2,0) {$...$};
	\node[state] 	(nn) at (3,0) {$n'$};
	
	\draw (-0.5,-1) -- +(4,0) node[above left, font=\small] {$[n]'$: Copy of $[n]$} -- +(4,1.5) -- +(0,1.5) -- +(0,0);
	\end{scope}
	
	\path[-latex] 	(1) edge (11)
	(2) edge (22)
	(n) edge (nn);

	\node[draw] (C) at (5,0) {$\Sigma$: counter};
	
	\begin{scope}[xshift = 5cm, yshift = -1.5cm]
	\node[state] 	(a) at (0,0) {$a$};
	\node[state] 	(b) at (0,-1) {$b$};
	
	\path 	(b) edge (a);
	
	\draw (-0.6,1) -- +(1.2,0) node[below left, font=\small] {$S$: Switch} -- +(1.2,-2.5) -- +(0,-2.5) -- +(0,0);
	\end{scope}

	\draw[very thick, -latex] (C) -- (3.5,0);
	\draw[very thick, -latex] (5,-0.5) -- (C);
	\draw[very thick, -latex] (2, -1.5) -- (2, -0.5);
	\end{tikzpicture}
	\caption{The sequentially $n$-complete transformation with minimal sequential time $(2 + o(1))\mathcal{T}$ } \label{fig:complete_universal_min_time}
\end{figure}
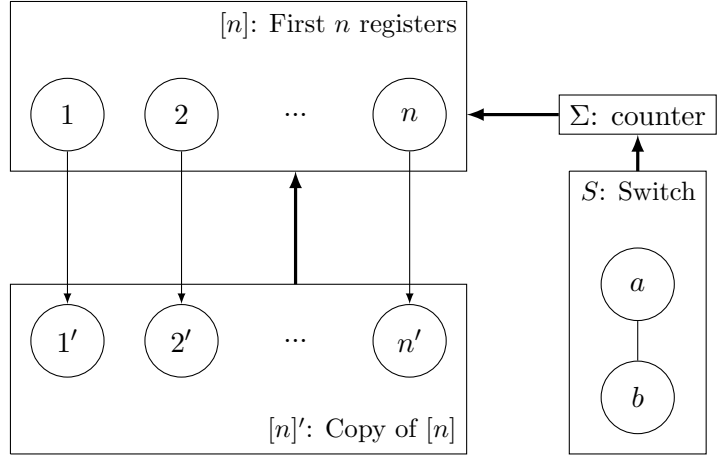

We now turn to proving an upper bound of $(1 + o(1))\mathcal{T}$. In the previous example, we had to update the Gray code counter at each time step, thus $\mathcal{T}$ times in total; the main improvement is to (almost) construct a Gray code where we only need to update the counter $o(\mathcal{T})$ times. 

A \emph{run} of length $l$ for $G$ is a sequence $c^{(i)}, \dots, c^{(i+l-1)}$ of consecutive distinct elements of $C(G)$. We say that $G$ has $r(G)$ runs if $C(G)$ can be partitioned into $r(G)$ runs. For instance, the canonical $(n,2)$-Gray code has $2^{n-1}$ runs. For $n=2$, we have
\begin{center}
\setlength{\tabcolsep}{5pt}
\begin{tabular}{llll}
	$a^{(0)} = 00$, & $a^{(1)} = 01$, & $a^{(2)} = 11$, & $a^{(3)} = 10$,\\
	$c^{(0)} = 1$, & $c^{(1)} = 2$, & $c^{(2)} = 1$, & $c^{(3)} = 2$.
\end{tabular}
\end{center}
For $n=3$, we have
\begin{center}
\setlength{\tabcolsep}{5pt}
\begin{tabular}{llllllll}
	$a^{(0)} = 000$, &  $a^{(1)} = 001$, &  $a^{(2)} = 011$, &  $a^{(3)} = 010$, &  $a^{(4)} = 110$, &  $a^{(5)} = 111$, &  $a^{(6)} = 101$, &  $a^{(7)} = 100$,\\
	$c^{(0)} = 1$, &  $c^{(1)} = 3$, &  $c^{(2)} = 2$, &  $c^{(3)} = 3$, &  $c^{(4)} = 1$, &  $c^{(5)} = 3$, &  $c^{(6)} = 2$, &  $c^{(7)} = 3$.
\end{tabular}
\end{center}
Clearly, any Gray code has at least $q^n/n$ runs; we shall construct $(n,q)$-Gray codes with $o(q^n)$ runs for even $q$. 

\begin{lemma} \label{lem:binary_Gray1}
For any $n$ a power of $2$, there exists an $(n,2)$-Gray code with $o(2^n)$ runs.
\end{lemma}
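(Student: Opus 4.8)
The plan is to build the desired $(n,2)$-Gray code recursively, using the standard ``product'' construction that glues together copies of a smaller Gray code, and to control the run count carefully so that it stays subexponential in $2^n$. Write $n = 2k$ (which is possible for $n$ a power of $2$, as long as $n \geq 2$). The idea is to view $A^n = A^k \times A^k$ and to take an ordering of $A^k$, say $(b^{(0)}, \dots, b^{(2^k-1)})$, together with a collection of Gray codes $G_0, \dots, G_{2^k-1}$ on the second block, such that the concatenation
\[ (b^{(0)}, a)_{a \in G_0}, \ (b^{(1)}, a)_{a \in G_1}, \ \dots, \ (b^{(2^k-1)}, a)_{a \in G_{2^k-1}} \]
is itself a valid $(n,2)$-Gray code: this requires that the first coordinate of the last state of $G_j$ agrees with the first coordinate of the first state of $G_{j+1}$ (so that the transition between consecutive blocks flips exactly one coordinate in the $A^k$ factor), and that $(b^{(j)}, b^{(j+1)})$ differ in exactly one coordinate. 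The first requirement is met by using a Gray code on $A^k$ for the $b$-sequence and alternating the ``direction'' (forward/reverse) of the second-block Gray codes so endpoints line up.

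The key point for the run count: within block $j$, the runs of $C(G)$ restricted to that block are essentially the runs of the inner Gray code $G_j$ (plus $O(1)$ for the boundary effects where a block-transition contributes a coordinate from the first block). So if $g(k)$ denotes the minimum number of runs of an $(n',2)$-Gray code with $n' = k$, the construction gives roughly
\[ g(2k) \;\le\; 2^k \cdot \big( g(k) + O(1) \big), \]
and unrolling this recurrence from the base case $g(1) = O(1)$ along $n, n/2, n/4, \dots, 1$ yields $g(n) = 2^{n/2} \cdot 2^{n/4} \cdots 2 \cdot O(\log n) = 2^{n - 1} \cdot O(\log n)$, which is \emph{not} yet $o(2^n)$ --- so a cleverer split is needed. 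Instead I would iterate the splitting more aggressively, or choose the inner codes so that consecutive inner codes share long runs and can be ``merged'' across the block boundary, reducing the factor from $2^k$ to $2^k / k$ or better; more robustly, one takes $k$ itself growing (e.g. $k = n/2$ but then recursing on the full structure), so that the product $\prod_j 2^{n/2^j}$ telescopes to $2^{n - n/\text{(something)}}$. The cleanest route is: split off a block of size $t = t(n)$ (a power of $2$, slowly growing, e.g. $t = \log_2 n$), writing $A^n = A^{n-t} \times A^t$; inside each of the $2^{n-t}$ outer cells, the inner Gray code on $A^t$ has $O(2^t/t)$ runs by the trivial bound, but \emph{consecutive} inner cells can be chained so that only $O(1)$ runs are ``wasted'' per cell-transition, giving total runs $O(2^{n-t}) + (\text{inner contribution})$. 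The inner contribution, summed over the recursion, is dominated by $2^{n-t}\cdot O(2^t/t) = O(2^n/t) = o(2^n)$, and the outer contribution $O(2^{n-t})$ is also $o(2^n)$; this is where choosing $t \to \infty$ is exactly what makes both pieces sublinear.

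The main obstacle, and the part that needs genuine care rather than routine bookkeeping, is the \emph{endpoint matching} across cell boundaries: to splice the per-cell Gray codes into one global Gray code while (a) flipping only one coordinate at each cell transition and (b) not creating a new run at (almost) every transition, one must arrange that the last coordinate flipped inside cell $j$ is \emph{different} from the outer-block coordinate that changes at the $j\to j{+}1$ transition, and also different from the first coordinate flipped inside cell $j{+}1$. Managing these local constraints simultaneously --- choosing orientations (forward/reverse) and cyclic shifts of the inner Gray codes, and choosing the outer Gray code on $A^{n-t}$ so that its own run structure cooperates --- is the crux. I would handle it by fixing a ``standard'' inner Gray code on $A^t$ with a designated start coordinate and end coordinate, reversing it in alternate cells so the shared endpoint is forced, and then verifying by a short case analysis that each cell transition adds at most a bounded number of runs. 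Everything else (the arithmetic that $O(2^n/t) + O(2^{n-t}) = o(2^n)$ once $t = t(n) \to \infty$, and that $t$ can be taken a power of $2$ dividing $n$ since $n$ is a power of $2$) is straightforward.
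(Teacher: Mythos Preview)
Your proposal contains a genuine gap. The sentence ``the inner Gray code on $A^t$ has $O(2^t/t)$ runs by the trivial bound'' is where it breaks: $2^t/t$ is the trivial \emph{lower} bound on the number of runs (a run has length at most $t$, and there are $2^t$ steps), not an upper bound. The canonical $(t,2)$-Gray code has $2^{t-1}$ runs, and producing a code with $O(2^t/t)$ runs is at least as hard as the lemma you are trying to prove --- in fact strictly harder, since the lemma only asserts $o(2^t)$. With only the honest bound $O(2^t)$ runs for the inner code, your product construction gives $2^{n-t}\cdot O(2^t)+O(2^{n-t})=O(2^n)$, not $o(2^n)$. Trying to rescue this by induction (assume $\epsilon(t)2^t$ runs for $t<n$, deduce $\epsilon(n)\le \epsilon(t)+O(2^{-t})$) does not bootstrap either: the recurrence never drives $\epsilon$ below the base-case value $\epsilon(2)\approx 1/2$, regardless of how you choose $t=t(n)$.

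You correctly diagnosed that the naive doubling recursion $g(2k)\le 2^k(g(k)+O(1))$ only yields $O(2^n)$, but your proposed fixes (aggressive splitting, run-merging across block boundaries, slowly growing $t$) do not supply the missing idea. What the paper does instead is keep the doubling construction $G_n\mapsto G_{2n}$ but track a finer invariant than the run count: the set
\[
\Psi(n,d)=\{\,i:\text{the next occurrence of }c^{(i)}\text{ is at distance}\ge d\,\}.
\]
The construction is arranged so that gaps \emph{double}: $|\Psi(2n,2d)|\ge 2^n|\Psi(n,d)|-2^nd$. Starting from $|\Psi(n,2)|=2^n$ and unrolling with $d\approx\log_2 n$ gives $|\Psi(n,d)|=2^n-o(2^n)$, and since indices in $\Psi(n,d)$ can be partitioned into runs of length $d$, the run count is at most $|\Psi(n,d)|/d+2(2^n-|\Psi(n,d)|)+1=o(2^n)$. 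The point is that the recursion improves $d$ rather than the run count directly; this is the leverage your argument lacks.
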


\begin{proof}
We shall prove the result by induction on $n$. The code $G_2$ is the canonical Gray code. Suppose $G_n = (a^{(0)},\dots,a^{(2^n-1)})$ (or simply written, $0$ up to $2^n-1$), then $G_{2n}$ is given by
\begin{align*}
	& \Big( (0, 0), (0, 1), (1, 1), (1, 2), \dots, (2^n-1, 2^n-1), (2^n-1, 0),\\
	& (2^n-2, 0), (2^n-2, 1), \dots, (2^n - 3, 2^n - 1), (2^n - 3, 0),\\
	& \vdots\\
	& (2, 0), (2, 1), \dots, (1, 2^n-1), (1, 0) \Big).
\end{align*}
There are $2^{n-1}$ rows, each containing $2^{n+1}$ elements. The code $G_4$ is then
\begin{align*}
	G_4 &= \Big( 0000, 0001, 0101, 0111, 1111, 1110, 1010, 1000,\\
	& \quad 1100, 1101, 1001, 1011, 0011, 0010, 0110, 0100 \Big),\\
	C(G_4) &= (2, 4, 2, 3, 1, 4, 2, 3, 2, 4, 2, 3, 1, 4, 2, 3),
\end{align*}
which can be partitioned into six runs (instead of eight for the canonical code).

Let $\Psi(n,d)$ denote the set of indices $i$ such that the next occurrence of $c^{(i)}$ appears at least $d$ indices later. More formally, let $\Gamma_n = (V,E)$ be the directed graph on $V = \{0, \dots, 2^n-1\}$ with arcs $E = \{(i,i+1 \mod 2^n-1) : i \in V\}$ and let $d(i,j)$ be the length of the path from $i$ to $j$ in $\Gamma_n$, then 
$$
	\Psi(n,d) = \{i : 0 < d(i,j) < d \Rightarrow c^{(j)} \ne c^{(i)}\}.
$$
For any $d$, the Gray code $G_n$ has at most 
$$
	|\Psi(n,d)|/d + 2(2^n - |\Psi(n,d)|) + 1 
$$
runs. Indeed, split $\Psi(n,d)$ into sequences $s_1, \dots, s_m$ of consecutive indices, where $m \le 2^n - |\Psi(n,d)| + 1$. Each sequence $s_t$ of length $l_t$ ($1 \le t \le m$) can be partitioned into $\lceil l_t / d \rceil \le l_t/d + 1$ runs, thus requiring at most $|\Psi(n,d)|/d + 2^n - |\Psi(n,d)| + 1$ runs to partition $\Psi(n,d)$. Moreover, the indices outside of $\Psi(n,d)$ can be partitioned into singleton runs; altogether, this yields $|\Psi(n,d)|/d + 2(2^n - |\Psi(n,d)|) + 1$ runs.

Our strategy is then to find a distance $d$ such that $d = \omega(1)$ and $2^n - |\Psi(n,d)| =  o(2^n)$. We have $|\Psi(n,2)| = 2^n$ for all $n$ and by construction,
$$
	|\Psi(2n, 2d)| \ge 2^n |\Psi(n,d)| - 2^n d, 
$$
since the only ones that do not follow the simply doubling pattern are the ones at the end of every row. Denoting the largest power of two less than or equal to $\log_2 n$ as $l$, we then obtain
\begin{align*}
	|\Psi(n, l)| &\ge 2^{n/2} |\Psi(n/2, l/2)| - 2^{n/2} l\\
	&\ge 2^{3n/4} |\Psi(n/4, l/4)| - l( 2^{3n/4 - 1} + 2^{n/2})\\
	&\vdots\\
	&\ge 2^{n - 2n/l} |\Psi(2n/l, 2)| - l(2^{n - 2n/l - l} + \dots + 2^{n/2})\\
	&= 2^n - o(2^n).
\end{align*}
\end{proof} 

\begin{lemma} \label{lem:binary_Gray2}
For any $n$, there exists an $(n,2)$-Gray code with $o(2^n)$ runs.
\end{lemma}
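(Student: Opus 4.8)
The plan is to reduce the general case to the case ``$n$ a power of $2$'' already handled in Lemma~\ref{lem:binary_Gray1}. Given an arbitrary $n$, write $n = N + k$ where $N$ is the largest power of $2$ with $N \le n$, so that $0 \le k < N \le n < 2N$; in particular $k = O(n)$ and, more importantly, $2^k = O(2^{n/2})$ since $N \ge n/2$. The idea is to take the good Gray code $G_N$ on $\GF(2)^N$ produced by Lemma~\ref{lem:binary_Gray1}, which has $o(2^N)$ runs, and lift it to a Gray code on $\GF(2)^{n} = \GF(2)^{N} \times \GF(2)^{k}$ by the standard product construction: for each state $a^{(i)}$ of $G_N$ we run through all $2^k$ elements of $\GF(2)^k$ (using the canonical $k$-bit reflected Gray code, which is cyclic), then advance the $\GF(2)^N$-coordinate by one step of $G_N$, and repeat. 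Concretely, the new code visits $(a^{(0)}, b^{(0)}), (a^{(0)}, b^{(1)}), \dots, (a^{(0)}, b^{(2^k-1)}), (a^{(1)}, b^{(2^k-1)}), (a^{(1)}, b^{(2^k-2)}), \dots$, alternating the direction of the $\GF(2)^k$-sweep on successive blocks so that consecutive states always differ in exactly one coordinate. This is exactly the row-by-row construction already used in the proof of Lemma~\ref{lem:binary_Gray1} (there with $k = n$), so its correctness as a Gray code is routine.

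The key step is the run-count bookkeeping. Within a single block $\{(a^{(i)}, \cdot)\}$, the coordinate sequence $C$ is the coordinate sequence of the canonical $k$-bit Gray code, which can be partitioned into $2^{k-1}$ runs; at the junction between block $i$ and block $i+1$ we insert one transition in coordinate $c^{(i+1)} \in [N]$ (the coordinate in which $a^{(i)}$ and $a^{(i+1)}$ differ). So the whole code is partitioned into at most $2^N$ blocks, each contributing $2^{k-1}$ runs internally, plus $2^N$ junction coordinates; this alone gives $O(2^N \cdot 2^{k-1}) = O(2^{n-1})$ runs, which is not yet $o(2^n)$. To do better I would merge runs across junctions: a run ending inside block $i$ can be extended through the junction coordinate $c^{(i+1)}$ and into the start of block $i+1$ whenever $c^{(i+1)}$ does not collide with the coordinates already used. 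Since $G_N$ has only $o(2^N)$ runs, for all but $o(2^N)$ indices $i$ the coordinate $c^{(i+1)}$ equals $c^{(i+2)}$ as well, i.e.\ the $\GF(2)^N$-coordinate does \emph{not} change between consecutive junctions; more usefully, one exploits that long runs of $G_N$ guarantee that the $N$-coordinate changed recently is distinct from the $k$ ``small'' coordinates, so each maximal run of $G_N$ of length $l$ lets us amalgamate the corresponding $l$ consecutive blocks into $O(2^k)$ runs total rather than $O(l 2^k)$. Summing over the $o(2^N)$ runs of $G_N$ of total length $2^N$, and using $2^k = o(2^{n/2})$ hence $2^k = o(2^{N/2})$... wait, more simply: the number of runs becomes $O\big(r(G_N)\cdot 2^k\big) + O(2^N) = o(2^N)\cdot 2^k + O(2^N)$, and since $2^N \cdot 2^k = 2^n$ while $r(G_N) = o(2^N)$, this is $o(2^n)$.

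The main obstacle is making the amalgamation argument in the previous paragraph precise, i.e.\ verifying that within each maximal run of $G_N$ the bookkeeping genuinely collapses $l$ blocks' worth of runs down to $O(2^k)$ rather than $O(l\cdot 2^k)$, and checking the boundary behaviour at the two ends of each $G_N$-run where the junction coordinate may coincide with a coordinate used at the edge of the $k$-bit sweep. I expect this to be a short but slightly fiddly counting argument, entirely analogous to the $\Psi(n,d)$ analysis in Lemma~\ref{lem:binary_Gray1}: one defines, as there, the set of ``safe'' indices where the nearest repeat of the current coordinate is far away, shows its complement has size $o(2^n)$ using $r(G_N) = o(2^N)$ and $2^k = o(2^{N})$, and partitions the safe indices into long runs and the rest into singletons. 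Once that estimate is in hand the lemma follows immediately, and with it the upper bound $\mathbf{st}^{\min}_{f} = (1+o(1))q^{nq^n}$ in Theorem~\ref{th:complete_min_time} for $q=2$ (the even-$q$ and general-$q$ cases presumably reducing to this one in the lemmas that follow).
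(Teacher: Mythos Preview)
Your decomposition $n = N + k$ and the product construction are exactly right, but you have the inner and outer loops swapped, and this is fatal for the run count. In your construction the canonical $k$-bit code is the \emph{inner} loop: for each of the $2^N$ states of $G_N$ you sweep through $\GF(2)^k$. The coordinate sequence of the inner sweep uses only the $k$ ``small'' coordinates, and the canonical reflected code repeats its least-significant coordinate every second step. Concretely, for $k=2$ the coordinate sequence of the product code is
\[
N{+}2,\; N{+}1,\; N{+}2,\; c_1,\; N{+}2,\; N{+}1,\; N{+}2,\; c_2,\; N{+}2,\; N{+}1,\; N{+}2,\; c_3,\;\dots
\]
and no amalgamation across a junction can avoid the forced run break at each recurrence of $N{+}2$: you get exactly two runs per block regardless of the $c_i$, hence $2\cdot 2^N = 2^{n-1}$ runs in total. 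Your claimed bound ``$l$ consecutive blocks collapse to $O(2^k)$ runs'' is therefore false --- the $\Theta(2^{k-1})$ internal run breaks of the canonical $k$-bit code recur in every block and cannot be merged away, because the repeated coordinate lies in $\{N{+}1,\dots,N{+}k\}$ and is unaffected by the distinctness of the $c_i \in [N]$. The $\Psi(n,d)$ approach fails for the same reason: already for $d=3$ half the indices are unsafe, and $2^n - |\Psi(n,d)|$ is $\Theta(2^n)$ for every $d \ge 3$.

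The fix is a one-line change: put $G_N$ as the \emph{inner} loop and the $(n-N)$-bit code as the \emph{outer} loop. Then each of the $2^{n-N}$ rows inherits the $o(2^N)$ runs of $G_N$ from Lemma~\ref{lem:binary_Gray1}, the inter-row junctions contribute only $O(2^{n-N})$ further breaks, and the total is $2^{n-N}\cdot o(2^N) + O(2^{n-N}) = o(2^n)$ with no amalgamation argument needed. This is precisely the paper's proof.
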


\begin{proof}
This is obtained by the usual ``product'' construction of Gray codes. Let $m$ be the largest power of two less than or equal to $n$. Denote the $(m,2)$-Gray code from Lemma \ref{lem:binary_Gray1} by $( 0, 1, \dots, 2^m -1)$ and an $(n-m,2)$-Gray code by $(0, 1, \dots, 2^{n-m} - 1)$. Now, construct an $(n,2)$-Gray code as follows:
\begin{align*}
	& \Big( (0, 0), \dots, (0, 2^m-1),\\
	& (1, 0), \dots, (1, 2^m-1),\\
	& \vdots\\
	& (2^{n-m}-1, 0), \dots, (2^{n-m} - 1, 2^m - 1) \Big).
\end{align*}
This has at most $2^{n-m} \cdot o(2^m) = o(2^n)$ runs.
\end{proof}

\begin{lemma} \label{lem:even_Gray}
For any even $q$ and any $n$, there exists an $(n,q)$-Gray code with $o(q^n)$ runs.
\end{lemma}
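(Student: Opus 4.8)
The plan is to reduce the case of general even $q$ to the binary case treated in Lemma \ref{lem:binary_Gray2} by writing $q = 2s$ and viewing $A^n = (\mathbb{Z}_q)^n$ through a radix decomposition of each coordinate. Concretely, I would identify $\mathbb{Z}_q \cong \mathbb{Z}_2 \times \mathbb{Z}_s$ (as a set, via the map $a \mapsto (a \bmod 2, \lfloor a/2 \rfloor)$, or more conveniently as a \emph{reflected} mixed-radix digit), so that $A^n$ is in bijection with $\GF(2)^n \times (\mathbb{Z}_s)^n$. The idea is to take the $(n,2)$-Gray code $H = (b^{(0)}, \dots, b^{(2^n-1)})$ from Lemma \ref{lem:binary_Gray2}, which has $o(2^n)$ runs, and a standard reflected $(n,s)$-Gray code $K = (d^{(0)}, \dots, d^{(s^n-1)})$, and form the product Gray code on $A^n$ of length $q^n = 2^n s^n$ in which the ``outer'' coordinate runs through $K$ while, for each fixed value of the $\mathbb{Z}_s$-part, the ``inner'' coordinate runs through $H$ (alternating direction, as in a reflected product, so that consecutive blocks join up by a single change in the $\mathbb{Z}_s$-part at the appropriate coordinate).

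The key steps, in order: first, set up the bijection $A^n \leftrightarrow \GF(2)^n \times (\mathbb{Z}_s)^n$ coordinatewise, noting that a single-coordinate change in $\GF(2)^n$ or a single-coordinate change in $(\mathbb{Z}_s)^n$ each corresponds to a single-coordinate change in $A^n$ (since the $i$-th $\GF(2)$-digit and the $i$-th $\mathbb{Z}_s$-digit together form the $i$-th $\mathbb{Z}_q$-coordinate). Second, build the product/concatenation: traverse $H$ (forwards), then change one $\mathbb{Z}_s$-coordinate according to $C(K)$, then traverse $H$ backwards, change the next $\mathbb{Z}_s$-coordinate, and so on — this is exactly the reflected-product construction already used in Lemma \ref{lem:binary_Gray2}, and it produces a genuine $(n,q)$-Gray code. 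Third, count runs: each of the $s^n$ traversals of $H$ contributes $r(H) = o(2^n)$ runs, and there are $s^n$ ``seam'' coordinate-changes between blocks, each contributing at most one extra run when it breaks a run; so the total is at most $s^n(r(H) + 1) + s^n = s^n \cdot o(2^n) = o(2^n s^n) = o(q^n)$, as required. (One should be slightly careful that a seam change of coordinate $c^{(i)}_K$ may merge with, rather than only split, adjacent runs, but the crude bound ``runs of the concatenation $\le$ sum of runs of the pieces $+$ number of seams'' suffices and is exactly the estimate used in Lemma \ref{lem:binary_Gray2}.)

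I do not expect a serious obstacle here; the main point to get right is the bookkeeping that a single mixed-radix digit change in the decomposition $\mathbb{Z}_q = \mathbb{Z}_2 \times \mathbb{Z}_s$ really is a single change in the $\mathbb{Z}_q$-coordinate, and that the reflected-product construction lands consecutive blocks a Hamming distance $1$ apart. The only mildly delicate case is when $s$ itself is even, where one might be tempted to recurse; but recursion is unnecessary — any standard reflected $(n,s)$-Gray code works for the outer factor regardless of its run count, since the $s^n$ factor is already absorbed into the $o(q^n)$ bound through the $o(2^n)$ savings on the binary factor. In short, the binary savings from Lemma \ref{lem:binary_Gray2} are inherited multiplicatively, and the argument is a direct adaptation of the product construction already in play.
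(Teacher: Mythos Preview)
Your proposal is correct and is essentially the paper's own argument: the paper likewise writes $q=2p$, identifies $[q]^n$ with $[p]^n\times[2]^n$ coordinatewise, and runs the $(n,2)$-Gray code of Lemma~\ref{lem:binary_Gray2} as the inner factor over an arbitrary $(n,p)$-Gray code as the outer factor, obtaining at most $p^n\cdot o(2^n)=o(q^n)$ runs. Your write-up is in fact a touch more careful than the paper's, since you make the reflected-product structure (and hence the Hamming-distance-$1$ transitions at the seams) explicit, whereas the paper's displayed ordering leaves this to the reader.
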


\begin{proof}
Here again, we use a ``product'' construction, viewing each element of $[q]^n (=A^n)$ as an element of $[p]^n \times [2]^n$, where $p =  \frac{q}{2}$. We then combine any $(n,p)$-Gray code with the $(n,2)$-Gray code from Lemma \ref{lem:binary_Gray2} as follows:
\begin{align*}
	& \Big( (0, 0), \dots, (0, 2^n-1),\\
	& (1, 0), \dots, (1, 2^n-1),\\
	& \vdots\\
	& (p^n - 1, 0), \dots, (p^n - 1, 2^n - 1) \Big).
\end{align*}
Clearly, this has at most $p^n \cdot o(2^n) = o(q^n)$ runs.
\end{proof}

For odd $q$, we do not use a Gray code. Instead, an $(n,q)$-\emph{pseudo-Gray code} of length $L$ is a sequence $P = (p^{(0)}, \dots, p^{(L-1)})$ of elements of $[q]^n$ such that every element of $[q]^n$ appears in the sequence and any two consecutive elements of the sequence only differ by one coordinate. (A pseudo-Gray code is a Gray code if every element appears exactly once.) Runs are defined for pseudo-Gray codes in the natural way and the number of runs is still denoted $r(P)$; the \emph{redundancy} $R(P)$ of a pseudo-Gray code is $R(P) = r(P) + L - q^n$.

\begin{lemma} \label{lem:odd_Gray}
For any $q$ and any $n$, there exists an $(n,q^{q^n})$-pseudo-Gray code with redundancy $o(q^{nq^n})$.
\end{lemma}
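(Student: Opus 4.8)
The plan is to split on the parity of $Q := q^{q^n}$, which is the parity of $q$. When $q$ is even, $Q$ is even, and Lemma~\ref{lem:even_Gray} applied with alphabet size $Q$ and dimension $n$ yields an $(n,Q)$-Gray code $G$ with $o(Q^n)$ runs. Since a genuine Gray code has length exactly $Q^n$, its redundancy is $r(G)+Q^n-Q^n=o(Q^n)=o(q^{nq^n})$, which settles this case. One caveat is that Lemma~\ref{lem:even_Gray} is being invoked with an alphabet that grows (very fast) with $n$; this is legitimate because the product construction behind that lemma produces a code with at most $(Q/2)^n\,\phi(n)$ runs, where $\phi(n)=o(2^n)$ is the run count of the binary Gray code of Lemma~\ref{lem:binary_Gray2} and is independent of the alphabet, so dividing by $Q^n$ gives $\phi(n)/2^n=o(1)$.

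When $q$ is odd, $Q$ is odd, and there need not be a genuine Gray code of $[Q]^n$ with few runs, so I would pass to the even alphabet $Q+1$. By the even case, there is an $(n,Q+1)$-Gray code $G'$ with $r(G')=o((Q+1)^n)=o(Q^n)$ runs. View $[Q]^n$ as the set of tuples of $[Q+1]^n$ that avoid the largest symbol. Deleting from the cyclic sequence $G'$ every state using that symbol breaks it into paths $\pi_1,\dots,\pi_s$ which together visit every state of $[Q]^n$ exactly once, with consecutive states inside each $\pi_t$ still differing in one coordinate. The number of paths is at most the number of deleted states, namely $s\le (Q+1)^n-Q^n\le n(Q+1)^{n-1}$ (mean value theorem), which is $o(Q^n)$ because $n/Q=n/q^{q^n}\to 0$.

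Now stitch the paths together: between $\pi_t$ and $\pi_{t+1}$ insert a shortest path of the Hamming graph $H(n,Q)$ from $\mathrm{end}(\pi_t)$ to $\mathrm{start}(\pi_{t+1})$, with its two endpoints removed. Such a bridge has at most $n-1$ new states and changes coordinates one at a time, contributing $O(1)$ runs. Concatenating $\pi_1,\dots,\pi_s$ with these $s-1$ bridges gives a sequence $P$ in which every element of $[Q]^n$ occurs and consecutive elements differ in one coordinate, i.e.\ an $(n,Q)$-pseudo-Gray code. Its length is $L=Q^n+(\text{total bridge length})\le Q^n+(s-1)(n-1)\le Q^n+n^2(Q+1)^{n-1}$, and $r(P)\le r(G')+O(s)=o(Q^n)$, whence
\[
R(P)=r(P)+L-Q^n\le o(Q^n)+n^2(Q+1)^{n-1}=o(Q^n)=o(q^{nq^n}),
\]
using $(1+1/Q)^{n-1}\to 1$ and $n^2/Q\to 0$.

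The genuinely delicate point is the bookkeeping in the odd case: one must simultaneously control the extra runs and the extra (repeated) visits created by the stitching, and the fact that both are $o(Q^n)$ rests entirely on the alphabet $Q=q^{q^n}$ being astronomically larger than the dimension $n$ — all the estimates $(Q+1)^n-Q^n\le n(Q+1)^{n-1}$ and $n^2(Q+1)^{n-1}=o(Q^n)$ require $n=o(Q)$. The only other point needing care is the one already flagged above, namely that the run bound of Lemma~\ref{lem:even_Gray} is uniform enough to survive an alphabet size that is a (fast-growing) function of $n$, which holds because that bound has the shape $(\text{alphabet}/2)^n\cdot o(2^n)$ with the $o(2^n)$ factor alphabet-independent.
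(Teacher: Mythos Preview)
Your argument is correct. The even case is identical to the paper's. In the odd case you take a genuinely different (though closely related) route: the paper passes to the even alphabet $Q-1$, takes an $(n,Q-1)$-Gray code with $o(Q^n)$ runs, and then simply \emph{appends} the $Q^n-(Q-1)^n$ missing states of $[Q]^n$, each reachable in at most $n$ steps, giving redundancy at most $2n(Q^n-(Q-1)^n)+o(Q^n)\le 2n^2 Q^{n-1}+o(Q^n)=o(Q^n)$. You instead pass to $Q+1$, delete the states using the extra symbol, and stitch the surviving arcs with Hamming-graph bridges. Both reductions hinge on the same estimate, namely that $|[Q\pm 1]^n\setminus[Q]^n|\le n(Q+1)^{n-1}=o(Q^n)$ because $n/Q\to 0$, and both pay $O(n)$ per ``bad'' state. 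The paper's append construction is a little lighter on bookkeeping (no run accounting for the stitched pieces is needed), while your version has the virtue of producing a code in which the states of $[Q]^n$ each appear exactly once along the main arcs, with repetitions confined to the short bridges. Your explicit remark that Lemma~\ref{lem:even_Gray} is being applied with an alphabet that grows with $n$, and that its run bound has the alphabet-uniform shape $(\text{alphabet}/2)^n\cdot o(2^n)$, is a point the paper uses implicitly but does not spell out.
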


\begin{proof}
If $q$ is even, we use the Gray code from Lemma \ref{lem:even_Gray}. Suppose that $q$ is odd. Then $Q:=q^{q^n}$ is odd, so, again by Lemma \ref{lem:even_Gray}, there is an $(n,Q-1)$-Gray code $G$ with $o(Q^{n})$ runs. We shall construct an $(n,Q)$-pseudo-Gray code by using $G$ first, and then enumerating all the remaining states in $[Q]^n$. It takes at most $n$ steps to go from of these remaining states to another, and there are $Q^{n} - (Q-1)^n$ of them. Thus, the redundancy of this pseudo-Gray code is at most
\begin{align*}
	2n (Q^n - (Q-1)^n) + o(Q^n) &= 2n (q^{nq^n} - (q^{q^n} - 1)^n) + o(Q^n) \\
	& \le 2n  \cdot n q^{(n-1)q^n} + o(Q^n) = o(Q^n).
\end{align*}
\end{proof}

Finally, we may prove Theorem \ref{th:seq_complete_min_time}.


\begin{proof}[Proof of Theorem \ref{th:seq_complete_min_time}]
We explicitly construct the sequentially $n$-complete transformation $f$ of the statement of the theorem. Let $Q = q^{q^n}$ and $P = (p^{(0)} = \id, \dots, p^{(\mathcal{T} - 1)})$ be the $(n, Q)$-pseudo Gray code of Lemma \ref{lem:odd_Gray}. We use the notation $C(P) = (c^{(0)}, \dots, c^{(\mathcal{T} - 1)})$, which is partitioned into $r = r(P)$ runs $R_1 = (c^{(0)}, \dots, c^{(\rho_1 - 1)}), \dots, R_r = (c^{(\rho_{r-1})}, \dots, c^{(\mathcal{T} - 1)})$ and
\begin{align*}
	\tau &: [r] \times [n] \to \{0,\dots, \mathcal{T}-1\}\\
	(s, i) \tau &= \begin{cases}
		\lambda & \text{if } \exists \lambda : i = c_\lambda \in R_s\\
		0 &\text{if } i \notin R_s.
	\end{cases}
\end{align*}

Let $\bar{G} = (\bar{a}^{(0)}, \dots, \bar{a}^{(q^\sigma -1)})$ be a $(\sigma, q)$-Gray code, where $\sigma = \lceil \log_q r \rceil + 1$. Denoting $\Sigma = (k_1, \dots, k_\sigma)$, again we identify $x_\Sigma$ with its index in $\bar{G}$. We also use $C(\bar{G}) = (\bar{c}_0, \dots, \bar{c}_{q^\sigma - 1})$ and the successor function for this code is $\bar{\mathrm{S}}$.

Let $m = 2n + \sigma + 2$ and $[m] = [n] \cup [n]' \cup \Sigma \cup \{a,b\}$. Then the transformation $f$ is defined as follows:
\begin{align*}
	(x) f_i &= (x_{[n]'}) p_i^{( ( x_\Sigma , \; i)  \tau )}, && (1 \le i \le n),\\ \\
	(x) f_{[n]'} &= x_{[n]}, \\ \\
	(x) f_\Sigma &=	\begin{cases}
					(x_\Sigma) \bar{\mathrm{S}} &\text{if } x_a = x_b\\
					1 & \text{if } x_a \ne x_b,
				\end{cases}
				&& (1 \le i \le \sigma),\\ \\
	(x) f_a &= x_b,\\ \\
	(x) f_b &= x_b + 1.
\end{align*}
Intuitively, registers in $[n]'$ maintain a copy of the original configuration of registers in $[n]$; registers in $\Sigma$ form a counter indicating the run number in the pseudo-Gray code $P$; registers $a$ and $b$ form a reset switch for the run counter. This transformation is illustrated in Figure \ref{fig:complete_universal_min_time_optimal}.

The program computing $P = (p^{(0)} = \id, \dots, p^{(\mathcal{T} - 1)})$ in order goes as follows.
\begin{description}
	\item[Step 1.] Make a copy of the first $n$ registers: $F^{(1', \dots, n')}$.
	
	\item[Step 2.] Reset switch on: $F^{(a)}$.
	
	\item[Step 3.] Reset run counter: $F^{(k_1, \dots, k_\sigma)}$.
	
	\item[Step 4.] Reset switch off: $F^{(b)}$.
	
	\item[Step 5.] For $s$ from $1$ to $r$ do:
	\begin{description}
		\item[Step 5.1.] Compute $p^{\left(\rho_{(s-1)}\right)}$ to $p^{(\rho_s-1)}$:  $F^{ \left( c^{ ( \rho_{(s-1)} ) }, \dots, c^{ ( \rho_s - 1 ) } \right) }$;
		
		\item[Step 5.2.] Increment the run counter: $F^{(k_{\bar{c}_{s+1}})}$.
	\end{description}
\end{description}
Total time:
$$
	n + 1 + \sigma + 1 + \sum_{s = 1}^r (\rho_s - \rho_{s-1} + 1) = \mathcal{T} + r + O(\log r) = \mathcal{T} + o(\mathcal{T}).
$$
We finally prove that this transformation is sequentially complete. Let $p^{(i_1)}, \dots, p^{(i_\ell)} \in \Tran(A^n)$ be a sequence of transformations. It is clear that applying Step 1 and then repeating $\ell$ times Steps 2 to 5 will simulate $\ell$ times the full sequence $p^{(0)}, \dots, p^{(\mathcal{T} - 1)}$. As such, $p^{(i_1)}$ is simulated during the first iteration, $p^{(i_2)}$ during the next, and so on until $p^{(i_\ell)}$.
\end{proof}

\begin{figure}
	\centering
	\begin{tikzpicture}[scale=1.5]
	\begin{scope}
		\node[state] 	(1) at (0,0) {$1$};
		\node[state] 	(2) at (1,0) {$2$};
		\node 			(3) at (2,0) {$...$};
		\node[state] 	(n) at (3,0) {$n$};
		
		\draw (-0.5,1) -- +(4,0) node[below left, font=\small] {$[n]$: First $n$ registers} -- +(4,-1.5) -- +(0,-1.5) -- +(0,0);
	\end{scope}

	\begin{scope}[yshift = -2cm]	
		\node[state] 	(11) at (0,0) {$1'$};
		\node[state] 	(22) at (1,0) {$2'$};
		\node 			(33) at (2,0) {$...$};
		\node[state] 	(nn) at (3,0) {$n'$};
		
		\draw (-0.5,-1) -- +(4,0) node[above left, font=\small] {$[n]'$: Copy of $[n]$} -- +(4,1.5) -- +(0,1.5) -- +(0,0);
	\end{scope}
	
	\path[-latex] 	(1) edge (11)
	(2) edge (22)
	(n) edge (nn);

	\node[draw] (C) at (4.5,0) {$\Sigma$: counter};
	
	\begin{scope}[xshift = 4.5cm, yshift = -1.5cm]
		\node[state] 	(a) at (0,0) {$a$};
		\node[state] 	(b) at (0,-1) {$b$};
		
		\path 	(b) edge (a);
		
		\draw (-0.6,1) -- +(1.2,0) node[below left, font=\small] {$S$: Switch} -- +(1.2,-2.5) -- +(0,-2.5) -- +(0,0);
	\end{scope}
	
	\begin{scope}[xshift = 6cm, yshift = 0cm]
		\node[state] 	(d1) at (0,0) {$d_1$};
		\node[state] 	(d2) at (1,0) {$d_2$};
		\node 			(d3) at (2,0) {$...$};
		\node[state] 	(dk) at (3,0) {$d_{\hat{k}}$};
		
		\draw (-0.5,1) -- +(4,0) node[below left, font=\small] {$D$: Data bits} -- +(4,-1.5) -- +(0,-1.5) -- +(0,0);
	\end{scope}
	\begin{scope}[xshift = 6cm, yshift = -1.5cm]
		\node[state] 	(p1) at (0,0) {$p_1$};
		\node[state] 	(p2) at (1,0) {$p_2$};
		\node 			(p3) at (2,0) {$...$};
		\node[state] 	(pr) at (3,0) {$p_{\hat{r}}$};
		
		\draw (-0.5,1) -- +(4,0) node[below left, font=\small] {$P$: Parity bits} -- +(4,-1.5) -- +(0,-1.5) -- +(0,0);
	\end{scope}

	\draw[very thick, -latex] (C) -- (3.5,0);
	\draw[very thick, -latex] (4.5,-0.5) -- (C);
	\draw[very thick, -latex] (2, -1.5) -- (2, -0.5);
	\draw[very thick, -latex] (5.5, 0.5) -- (3.5, 0.5);
	\end{tikzpicture}
	\caption{The sequentially $n$-complete transformation with time $(2 + o(1))\mathcal{T}$ } \label{fig:complete_universal_min_time_optimal}
\end{figure}
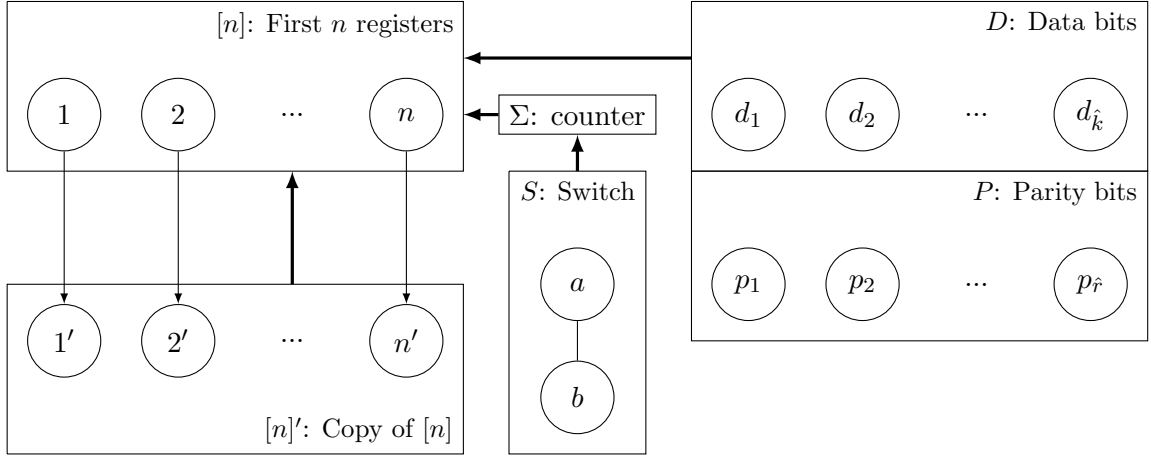

\begin{theorem} \label{th:complete_max_time}
Let $f$ be a sequentially $n$-complete transformation. Then, $nq^{nq^n} \leq  \mathbf{st}^{\max}_{f}$. Conversely, there exists a sequentially $n$-complete transformation $f$ such that 
$\mathbf{st}^{\max}_f = (n + 1 + o(1))q^{nq^n}$.
\end{theorem}

\begin{proof}
Viewing any coordinate function $A^n \to A$ as an element in $\mathbb{Z}_Q$, with $Q = q^{q^n} \ge 4$, we give an ordering of $\Tran(A^n) \cong \mathbb{Z}_Q^n$ such that any two consecutive transformations differ in all $n$ coordinate functions:
\begin{align*}
&	\Big(	(0, \dots, 0), (1, \dots, 1), \dots, (Q-1, \dots, Q-1),\\ 
&			(1,0, \dots, 0), (2,1, \dots, 1), \dots, (0, Q-1, \dots, Q-1),\\
&			\vdots\\
&			(0,1, \dots, 0), (1,2, \dots, 1), \dots, (Q-1, 0, \dots, Q-1),\\
&			\vdots\\
&			(Q-1, \dots, Q-1, 0), (0, \dots, 0,1), \dots, (Q-2, \dots, Q-2, Q-1) \Big).
\end{align*}
Clearly, the time of simulation of such a sequence of transformations is at least $nq^{nq^n}$, so $nq^{nq^n} \leq \mathbf{st}^{\max}_{f}$.

The construction of $f$ is based on Hamming codes, and describes a whole sequence of transformation of $A^n$ at once. Firstly, for $\hat{k}: = Q^{nQ}$, $\hat{r} = \lceil \log_2 \hat{k} \rceil + 1$ and $\hat{n} := \hat{k} + \hat{r}$, consider the $(\hat{n}, \hat{k}, 3)$-shortened Hamming code $\mathcal{C}$ in systematic form. Let $M \in \GF(2)^{\hat{k} \times \hat{n}}$ be its generator matrix and, for $H \subseteq [\hat{n}]$, let $M_H$ be the matrix formed with the $H$-th columns of $M$. This is a binary code, so in order to use it, let
\begin{align*}
	\mathrm{odd} &: A \to \GF(2)\\
	(a) \mathrm{odd} &= \begin{cases}
		1 & \text{if } a \equiv 1 \mod 2\\
		0 & \text{if } a \equiv 0 \mod 2,
	\end{cases}\\ \\
	\mathrm{err} &: A \to A\\
	(a) \mathrm{err} &= \begin{cases}
		a + 1 & \text{if } a < q-1\\
		a - 1 & \text{if } a = q-1.
	\end{cases}
\end{align*}
By applying them component-wise, we extend these functions to $\mathrm{odd} : A^k \to \GF(2)^k$ and $\mathrm{err} : A^k \to A^k$, for any $k \ge 1$. The shortened Hamming code can correct one error, thus let 
\begin{align*}
	\mathrm{dec} &: \GF(2)^{\hat{n}} \to \{0,\dots, \hat{n}\}\\
	(v) \mathrm{dec} &= \begin{cases}
		j & \text{if } v = c + e^j \text{ for some } c \in \mathcal{C}\\
		0 & \text{otherwise}.
	\end{cases}
\end{align*}

We denote $(\Tran(A^n))^Q = \{\Gamma^{(1)}, \dots, \Gamma^{(\hat{k})}\}$, where $\Gamma^{(j)} = (g^{(j;0)}, \dots, g^{(j;Q-1)})$, and $r = \lceil (\lceil \log_2 \hat{k} \rceil + 1) / \rho \rceil$. Let $m := 2n+ \hat{k} + \hat{r} + q^n + 2$ and $[m] = [n] \cup [n]' \cup D \cup P \cup \Sigma \cup \{a,b\}$, where $D = (d_1, \dots, d_{\hat{k}})$ are the indices corresponding to the data bits and $P = (p_1, \dots, p_{\hat{r}})$ correspond to the parity bits instead, and where $\Sigma = (k_1, \dots, k_{q^n})$. Again, we identify $x_\Sigma$ with its index in the $(q^n,q)$-Gray code $G = (a^{(0)}, \dots, a^{(Q-1)})$, and let $C(G) = (c^{(0)}, \dots c^{(Q -1)})$. The index $x_\Sigma$ shall work as a counter to decide which transformation will be simulated. For $x \in A^m$, define $(x)\tau :=(x_{D \cup P}) \mathrm{odd} \, \mathrm{dec}$. The sequentially $n$-complete transformation $f$ is given as follows:
\begin{align*}
	(x) f_i &= \begin{cases}
	 x_i, & \text{if } (x)\tau > \hat{k}, \\
	 (x_{[n]'}) g_i^{\left( (x)\tau ; \ x_\Sigma \right)}, & \text{otherwise},
	 	 \end{cases}  && (1 \le i \le n), \\ \\
	(x) f_{[n]'} &= x_{[n]}, \\ \\
	(x) f_{d_i} &= (x_{d_i}) \mathrm{err}, && (1 \le i \le \hat{k}), \\ \\
	(x) f_P &= \left( (x_D) \; \mathrm{odd}  \right) M_P, \\ \\
	(x) f_\Sigma &=	\begin{cases}
				(x_\Sigma) \mathrm{S} & \text{if } x_a = x_b,  \\
				0 &\text{if } x_a \ne x_b,
				\end{cases}  
				 \\ \\
	(x) f_a &= x_b,\\ \\
	(x) f_b &= x_b + 1.
\end{align*}

Let $\lambda \ge 1$ and $\Lambda = Q\lambda$. The sequence $g^{(i_1;0)}, \dots, g^{(i_1;Q-1)}, \dots,g^{(i_\lambda;0)}, \dots, g^{(i_\lambda; Q-1)}$ of length $\Lambda$ is simulated as follows:
\begin{description}
	\item[Step 1.] Make a copy of the first $n$ registers: $F^{(1', \dots, n')}$.
	
	\item[Step 2.] Encode $x_D$ into a codeword of $\mathcal{C}$: $F^{(p_1, \dots, p_{\hat{r}})}$.
	
	\item[Step 3.] Reset the counter $x_\Sigma$: $F^{(a)} \circ F^{(b)} \circ F^{(k_1, \dots, k_{q^n})} \circ  F^{(a)}$.
	
	\item[Step 4.] For $\mu$ from $1$ to $\lambda$ do:
	\begin{description}
		\item[Step 4.1.] Add an error to $\Gamma^{(i_\mu)} = (g^{(i_\mu;0)}, \dots, g^{(i_\mu; Q-1)}) $: $F^{(d_\mu)}$.
		
		\item[Step 4.2.] For $\sigma$ from $0$ to $Q-1$ do:
		\begin{description}
			\item[Step 4.2.1.] Compute $g^{(i_\mu;\sigma)}$: $F^{(1, \dots, n)}$.
			
			\item[Step 4.2.2.] Increment the counter $\Sigma$ according to the Gray code: $F^{( k_{c^{(\sigma+1)}} )}$. 
		\end{description}
		
		\item[Step 4.3.] Remove the error: $F^{(d_\mu)}$.
	\end{description}
\end{description}
For $\Lambda = \mathcal{T}$, the time to simulate this sequence is then given by 
$$
	n + \hat{r} + (q^n + 3) + Q^{n-1}( Q(n+1) + 2) = (n + 1 + o(1))\mathcal{T}.
$$
\end{proof}

Note that the sequentially $n$-complete transformation $\hat{f}$ of size $2n+2$ (or $2n+3$ when $q=2$) given in Theorem \ref{th:space_2n+2} does not have a very high sequential time compared with sequentially $n$-complete transformations of minimal sequential time; indeed, we may see that $\mathbf{st}^{\max}_{\hat{f}}$ is equal $O({\bf st}^{\max}_f \log {\bf st}^{\max}_f)$, with $f$ as in Theorem \ref{th:complete_max_time}.



\section{Simulation of transformations in parallel} \label{sec:asy}

So far, we have looked at sequential updates (i.e. one register at a time). This is a strong constraint for MC: if we were allowed to update all registers at once, then any function could be computed in only one time step. However, in our model of complete simulation, this is actually a strength and a necessity.

We extend our framework to consider the following type of simulations.

\begin{definition}[Parallel simulation]
Let $m \geq n \geq 1$. We say that $f: A^m \to A^m$ \emph{simulates in parallel} $g : A^n \to A^n$ if there exists $h \in \langle f \rangle$ such that $\pr_{[n]} \circ  g = h \circ \pr_{[n]}$.
\end{definition}

We also consider the slightest form of asynchronism in sequential simulations. For $f \in A^m \to A^m$, define $F^{(-m)} : A^m \to A^m$ by
\[ (x)F^{(-m)} := \left( (x)f_1,  (x)f_2, \dots,  (x)f_{m-1},  x_m \right). \]

\begin{definition}[Quasi-parallel sequential simulation]
Let $m \geq n \geq 1$. We say that $f: A^m \to A^m$ \emph{sequentially simulates in quasi-parallel} $g^{(1)}, \dots, g^{(\ell)} \in \Tran(A^n)$ if there exist $h^{(1)}, \dots, h^{(\ell)} \in \langle F^{(-m)}, F^{(m)} \rangle$ such that, for all $1 \leq i \leq \ell$,
\[  \pr_{[n]} \circ g^{(i)}  =  h^{(1)} \circ \dots \circ h^{(i)} \circ \pr_{[n]}, \] 
and the instruction $F^{(m)}$ appears at most once in the program $h^{(1)} \circ \dots \circ h^{(\ell)}$.
\end{definition}

Say that $f: A^m \to A^m$  is a \emph{quasi-parallel $n$-complete transformation} if it may sequentially simulate in quasi-parallel any finite sequence in $\Tran(A^n)$.

\begin{theorem} \label{th:quasi-parallel}
For any $m \geq n \ge 1$, there is no transformation $f : A^m \to A^m$ that may simulate in parallel every transformation in $\Tran(A^n)$. However, for any $n \ge 1$, there exists a quasi-parallel $n$-complete transformation. 
\end{theorem}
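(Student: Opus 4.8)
\emph{Impossibility of parallel universality.} For the negative part I would use the eventual image of $f$. Suppose $f : A^m \to A^m$ simulated in parallel every transformation of $\Tran(A^n)$. Since $A^m$ is finite the chain $(A^m) \supseteq (A^m)f \supseteq (A^m)f^2 \supseteq \dots$ stabilises, so $\Omega := \bigcap_{k \geq 1}(A^m)f^k$ is nonempty and $\Omega \subseteq (A^m)f^k$ for every $k \geq 1$. For $c \in A^n$ let $\kappa_c \in \Tran(A^n)$ be the constant map with image $\{c\}$. Parallel simulation of $\kappa_c$ gives $k_c \geq 1$ with $\pr_{[n]} \circ \kappa_c = f^{k_c}\circ \pr_{[n]}$, i.e. $\pr_{[n]}((x)f^{k_c}) = c$ for all $x \in A^m$, whence $\pr_{[n]}((A^m)f^{k_c}) = \{c\}$ and therefore $\pr_{[n]}(\Omega) \subseteq \{c\}$. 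As $\Omega \neq \emptyset$ this forces $\pr_{[n]}(\Omega) = \{c\}$ for \emph{every} $c \in A^n$, which is impossible since $|A^n| = q^n \geq 2$.

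\emph{Existence of a complete quasi-parallel $n$-universal transformation.} I would build $f : A^m \to A^m$ with $m := 2n + \lceil \log_q (q^{nq^n}+1)\rceil + 2$, splitting the registers into: the \emph{working} block $1,\dots,n$; the \emph{storage} block $n+1,\dots,2n$; a \emph{counter} block encoding $c \in \{0,1,\dots,N\}$ where $N := q^{nq^n} = |\Tran(A^n)|$ and $g^{(1)},\dots,g^{(N)}$ is a fixed enumeration of $\Tran(A^n)$; a \emph{shadow} register $m-1$; and a \emph{trigger} register $m$. Put $f_m(x) := x_{m-1}+1$, $f_{m-1}(x) := x_m$, and call the state \emph{armed} if $x_{m-1}=x_m$, \emph{disarmed} otherwise. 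Under $F^{([m-1])}$ the counter is reset to $0$ if the state is disarmed, moves $0 \mapsto 1$, $t \mapsto t+1$ for $1 \le t \le N-1$, and $N \mapsto 1$ if the state is armed; the storage functions are $f_j(x) := x_{j-n}$ when armed with $c=0$ and $f_j(x) := x_j$ otherwise; the working functions are $f_i(x) := x_i$ when disarmed or $c=0$ and $f_i(x) := (x)\pr_{[2n]\setminus[n]}\, g^{(c)}_i$ when armed with $c \in \{1,\dots,N\}$. The simulating program is $F^{(m)}$ followed by repeated applications of $P := F^{([m-1])}$, so $F^{(m)}$ occurs exactly once. The single $F^{(m)}$ leaves registers $1,\dots,m-1$ fixed and makes $x_m = x_{m-1}+1$, hence disarmed regardless of the initial state; the next $P$ resets the counter to $0$, re-arms, and (by the disarmed clause) leaves the working registers equal to the input; the following $P$ copies the input into the storage block without touching the working registers; thereafter each $P$ advances $c$ through $1,2,\dots,N,1,2,\dots$ and rewrites the working registers as $g^{(c)}$ applied to the frozen copy. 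Thus after initialisation the working registers cycle, with period $N$, through $g^{(1)}(\text{input}),\dots,g^{(N)}(\text{input})$, so every transformation of $\Tran(A^n)$ appears on them infinitely often. Given a finite sequence $g^{(i_1)},\dots,g^{(i_\ell)}$ I would chop the $P$-tail of the program into blocks $h^{(1)},\dots,h^{(\ell)}$ whose cumulative compositions produce $g^{(i_1)}(\text{input})$, then $g^{(i_2)}(\text{input})$, and so on, giving $\pr_{[n]} \circ g^{(i_t)} = h^{(1)}\circ\dots\circ h^{(t)}\circ \pr_{[n]}$ for all $t$.

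\emph{Main obstacle.} The difficulty, which all the above clauses are designed to defeat, is that $F^{([m-1])}$ rewrites all of $x_1,\dots,x_{m-1}$ at once, so one cannot "copy the input and then freeze registers $1,\dots,n$" the way sequential programs do, and one has only a single asynchronous move to break the symmetry from an adversarial initial state. The device is: (a) route the lone $F^{(m)}$ through the trigger/shadow pair so it provably forces a disarmed state from any start; (b) freeze the working registers while disarmed and while $c=0$, so the input survives until it is safely copied; and (c) arrange the counter cycle as $1 \to \dots \to N \to 1$ (re-entering $0$ only once) so the stored copy is never overwritten. Checking these invariants, and the boundary cases $q=2$ (where one needs $x_{m-1}+1 \neq x_{m-1}$) and $n=1$, is then routine.
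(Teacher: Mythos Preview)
Your impossibility argument is correct and is essentially the paper's: both derive a contradiction from the parallel simulation of constant maps, you via the eventual image $\Omega = \bigcap_k (A^m)f^k$, the paper by directly writing $f^{k_b} = f^{k_b-k_a}\circ f^{k_a}$ for two exponents $k_a<k_b$ witnessing constants $a\neq b$.

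Your construction for the positive part is also correct and uses the same overall architecture as the paper --- a counter cycling through an enumeration of $\Tran(A^n)$, together with a trigger/shadow pair so that the single $F^{(m)}$ forces a reset from any initial state --- but your storage mechanism is genuinely different and more economical. The paper uses a shift register of $Q^n+1$ blocks of $n$ registers each (so $f_j=x_{j-n}$ for $n+1\le j\le (Q^n+2)n$), letting the original input drift forward in lockstep with the counter $C$ and reading block $C$ at every step; this costs $m\approx n\,q^{nq^n}$ registers. You instead freeze a single $n$-register copy the one time the counter sits at $0$, and make the armed counter cycle $1\to\cdots\to N\to 1$ so the copy is never overwritten again; this brings the size down to $m\approx n q^{n}$. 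Both devices buy the same thing --- the working block always has access to the original input --- but yours avoids the exponential blow-up in storage at the cost of a slightly more elaborate counter (it must distinguish the one-shot state $c=0$ from the recurrent states $1,\dots,N$). A minor further difference: your choice $f_m(x)=x_{m-1}+1$ disarms after \emph{any} state, whereas the paper's $f_m(x)=x_m+1$ only disarms once $x_{m-1}=x_m$, which is why the paper's program inserts an extra $F^{([m-1])}$ before the asynchronous step.
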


\begin{proof}
Suppose that $f : A^m \to A^m$ may simulate in parallel every transformation in $\Tran(A^n)$. For $a,b \in A^n$, $a \neq b$, consider the constant transformations $g^{(a)}, g^{(b)} \in \Tran(A^n)$ defined by $(x)g^{(a)} = a$ and $(x)g^{(b)} = b$, for all $x \in A^n$. Then, by definition of parallel simulation, there exist integers $k_a < k_b$ such that, for all $x \in A^m$, 
\[ (x) f^{k_a}\circ \pr_{[n]} = (x)\pr_{[n]}\circ g^{(a)} = a \ \text{ and } \ (x) f^{k_b} \circ \pr_{[n]} = (x)\pr_{[n]} \circ g^{(b)} = b. \]
But now we obtain a contradiction:
\[  b = (x) f^{k_b} \circ \pr_{[n]} = \left( (x) f^{k_b - k_a} \right) f^{k_a} \circ \pr_{[n]}  = a. \]

Denote $\Tran(A^n) = \{p^2 , \dots, p^{\mathcal{T}+1} = \mathrm{id}\}$; moreover, let $p^c = \id$ for all $c \notin \{2, \dots, \mathcal{T}+1\}$. Let $m = (\mathcal{T}+2)n + \lceil \log_q (\mathcal{T}+1) \rceil + 2$ and let us decompose $[m]$ as $[n] \cup [n]_0 \cup \dots \cup [n]_\mathcal{T} \cup C \cup \{a, m\}$; moreover we let $[n]_{-1} = [n]$ and we denote the elements of $[n]_j$ as $(j,1), \dots, (j,n)$ for all $j$. Now we shall construct a quasi-parallel $n$-complete transformation $f$. The registers $[n]_0$ to $[n]_\mathcal{T}$ will hold successive copies of $x_{[n]}$; the registers in $C$ are a counter for $\Tran(A^n)$, as such we identify $x_C$ with its lexicographic index $c$. Define $f$ by:
\begin{alignat*}{2}
	 (x)f_{[n]} &= (x_{[n]_{c-1}}) p^{c+1} ,
	 \\ \\
	 (x)f_{[n]_j} &= x_{[n]_{j-1}}, &\qquad& 0 \le j \le \mathcal{T},
	 \\ \\
	 (x)f_C &= 
		\begin{cases}
		c + 1 \mod \mathcal{T}+2 &\text{if } x_m = x_a\\
		2 & \text{if } x_m \ne x_a,
		\end{cases}
		\\ \\
	 (x)f_a &= x_m,
	 \\ \\
	 (x)f_m &= x_a + 1.
\end{alignat*}

We prove that $f$ is a quasi-parallel $n$-complete transformation by giving a program simulating the sequence $p^1, \dots, p^{\mathcal{T}+1}$ repeated $l$ times for any $l \ge 1$.
\begin{description}
	\item[Step 1.] Do $F^{(-m)}$. This copies $x_{[n]}$ into $x_{[n]_0}$ and turns the switch off: $x_a = x_m$.
	
	\item[Step 2.] Do $F^{(m)}$. This turns the switch on: $x_m \ne x_a$.
	
	\item[Step 3.] Do $F^{(-m)}$. This resets the counter to $c = 2$, and turns the switch off: $x_a = x_m$. Note that the original contents of the first $n$ registers are now contained in $x_{[n]_1}$. 
	
	\item[Step 4.]  Do $\left( F^{(-m)} \right)^\mathcal{T}$. At each iteration, this increases the counter $c = (c+1) \mod \mathcal{T} + 2$, and so it computes the whole sequence $p^2, \dots, p^{\mathcal{T}+1}$. Note that the original value of $x_{[n]}$ is back in $x_{[n]}$.
	
	\item[Step 5.] Do $\left( F^{(-m)} \right)^{(l-1)(\mathcal{T} + 2)}$ in order to compute $l-1$ iterations of $p^1, \dots, p^{\mathcal{T}+1}$.
\end{description}

Now, as in the proof of Theorem \ref{th:seq_complete_min_time}, in order to simulate an arbitrary sequence $p^{i_1}, \dots, p^{i_l} \in \Tran(A^n)$, we use the program above to simulate $l$ times the full sequence $p^2, \dots, p^{\mathcal{T}+1}$. 
\end{proof}

\begin{figure}
	\centering
	\begin{tikzpicture}[scale=1.5]
	\begin{scope}
		\node[state] 	(1) at (0,0) {$1$};
		\node[state] 	(2) at (1,0) {$2$};
		\node 			(3) at (2,0) {$...$};
		\node[state] 	(n) at (3,0) {$n$};
		
		\draw (-0.5,1) -- +(4,0) node[below left, font=\small] {$[n]$: First $n$ registers} -- +(4,-1.5) -- +(0,-1.5) -- +(0,0);
	\end{scope}

	\begin{scope}[yshift = -1.5cm]	
		\node[state] 	(11) at (0,0) {$10$};
		\node[state] 	(21) at (1,0) {$20$};
		\node 			(31) at (2,0) {$...$};
		\node[state] 	(n1) at (3,0) {$n0$};
		
		\node			(N1) at (2,-0.8) {$[n]_0$};
		
		\draw (-0.5,-1) -- +(4,0) -- +(4,1.5) -- +(0,1.5) -- +(0,0);
	\end{scope}

	\path[-latex] 	(1) edge (11)
(2) edge (21)
(n) edge (n1);

	\begin{scope}[yshift = -3.5cm]	
		\node[state] 	(12) at (0,0) {$11$};
		\node[state] 	(22) at (1,0) {$21$};
		\node 			(32) at (2,0) {$...$};
		\node[state] 	(n2) at (3,0) {$n1$};

		\node			(N2) at (2,-0.8) {$[n]_1$};
		
		\draw (-0.5,-1) -- +(4,0) -- +(4,1.5) -- +(0,1.5) -- +(0,0);
	\end{scope}

	\path[-latex] 	(11) edge (12)
(21) edge (22)
(n1) edge (n2);

	\begin{scope}[yshift = -6cm]	
		\node[state] 	(1F) at (0,0) {$1\mathcal{T}$};
		\node[state] 	(2F) at (1,0) {$2\mathcal{T}$};
		\node 			(3F) at (2,0) {$...$};
		\node[state] 	(nF) at (3,0) {$n\mathcal{T}$};
		
		\draw (-0.5,-1) -- +(4,0) node[above left, font=\small] {$[n]_\mathcal{T}$: Ultimate copy of $[n]$} -- +(4,1.5) -- +(0,1.5) -- +(0,0);
	\end{scope}

	\begin{scope}[yshift = -5cm]
		\node	(1d) at (0,0) {$\vdots$};
		\node	(2d) at (1,0) {$\vdots$};
		\node	(nd) at (3,0) {$\vdots$};
	\end{scope}

	\path[-latex]	(12) edge (1d)
	(22) edge (2d)
	(n2) edge (nd)
	(1d) edge (1F)
	(2d) edge (2F)
	(nd) edge (nF);
	
	\node[draw] (C) at (6,0.5) {$C$: counter};

	\begin{scope}[xshift = 6cm, yshift = -2cm]
		\node[state] 	(a) at (0,0) {$a$};
		\node[state] 	(b) at (0,-1) {$m$};
		
		\path 	(b) edge (a);
		
		\draw (-0.6,1) -- +(1.2,0) node[below left, font=\small] {$S$: Switch} -- +(1.2,-2.5) -- +(0,-2.5) -- +(0,0);
	\end{scope}

	\draw[very thick, -latex] (C) -- (3.5,0.5);
	\draw[very thick, -latex] (6,-1) -- (C);
	\path (3.5, -1.5) edge[very thick] (4,-1.5)
	(4,-1.5) edge[very thick] (4,-0.7) 
	(4,-0.4) edge[very thick, -latex] (3.5, -0.4);
	\path (3.5, -3.5) edge[very thick] (4,-3.5)
	(4,-3.5) edge[very thick] (4,-0.2) 
	(4,-0.2) edge[very thick, -latex] (3.5, -0.2);
	\path (3.5, -6) edge[very thick] (4,-6)
	(4,-6) edge[very thick] (4,0) 
	(4,0) edge[very thick, -latex] (3.5, 0);
	\end{tikzpicture}
	\caption{The quasi-parallel $n$-complete transformation of Theorem \ref{th:quasi-parallel}} \label{fig:quasi-parallel}
\end{figure}
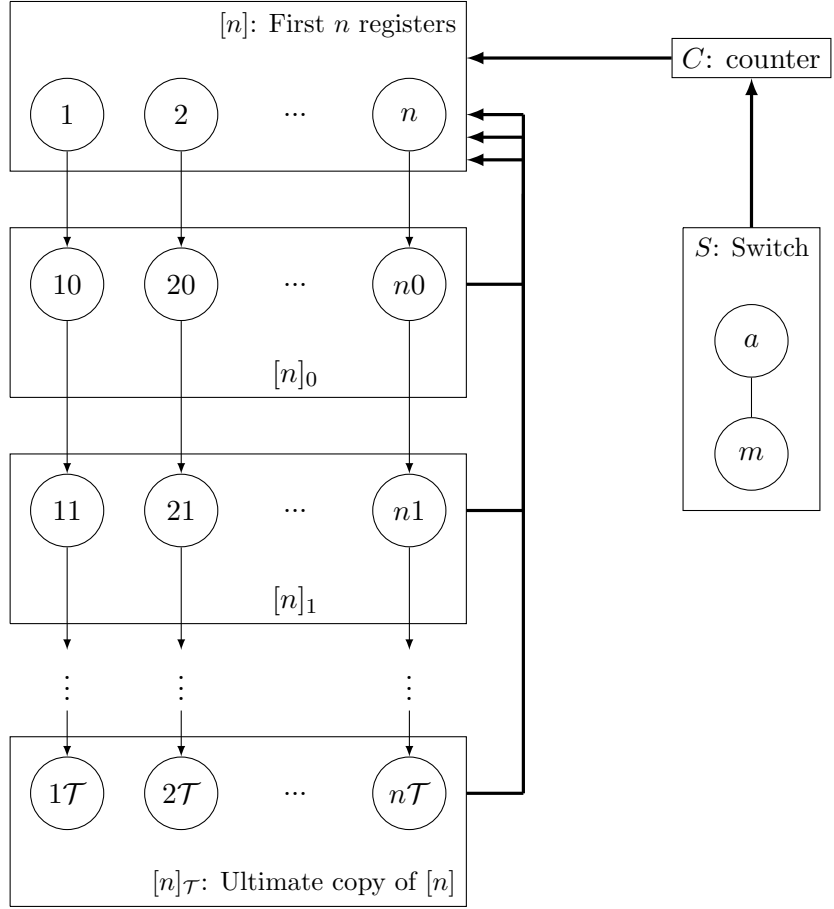


\section{Conclusions and future work}

In this paper, we studied $n$-complete automata networks over an alphabet $A$. These are transformations $f : A^m \to A^m$ which can simulate any transformation of $A^n$ by updating the registers using the coordinate functions of $f$. We showed that there is no $n$-complete transformation of size $n$, but we can always construct one of size $n+1$. We constructed a $n$-complete transformation which simulates any transformation with a maximal time of $2n$, and conjectured that this is the optimal time. We also studied transformations of $A^m$ which can sequentially simulate every sequence of transformations of $A^n$. We proved that the optimal time for such a transformation is between $s := q^{nq^n}$ and $c s$, where $c$ is a constant larger than $1$, and that its optimal size is at least $2n$. 
Finally, we established that there is no complete transformation updating all registers in parallel, but there exists one that updates all but one register in parallel. 

A natural generalization to our notion of $n$-completeness is to allow the transformation $f$  to work with a larger alphabet that the transformations it has to simulate. There are several definitions of simulation in this case. We could fix a projection $\mu$ from alphabet $B$ to $A$ and say that $f \in \Tran(B^n)$ simulates $h \in \Tran(A^n)$ if there is an update order $w$ such that,
\[ F^w \circ \varphi = \varphi \circ h \text{ with } \varphi: x \mapsto ((x_1)\mu, (x_2)\mu, \dots, (x_n)\mu ).\]
With this definition we can already construct an $n,q$-complete transformation $f \in \Tran(B^n)$, for any $q=\vert A \vert $ and $n \geq 3$, with $\vert B \vert = 2q$.
Alternatively, we could say that $f \in \Tran(B^n)$ simulates $h \in \Tran(A^n)$ if there is a certain update order $w$ such that,
\[ \forall x \in A^n,\ (x)F^w = (x)h.\]
With this definition, we can construct a $n,q$-complete transformation $f \in \Tran(B^n)$, for any $q$ and $n \geq 3q$, with $\vert B \vert = q+1$. We could modify these definitions by considering, for example, any projection from $B^n$ to $A^n$.

\section{Acknowledgment}

We would like to thank Adrien Richard and Jean-Paul Comet for stimulating discussions. We also thank both of the anonymous referees of this paper for their valuable comments; in particular, the first referee independently found constructions of $n$-complete transformations of size $n+1$ when $q=2$ and $q \geq 3$. This work was supported by the EPSRC grant EP/K033956/1.

\bibliographystyle{amsplain}

\bibliography{MemorylessBib}

\providecommand{\bysame}{\leavevmode\hbox to3em{\hrulefill}\thinspace}
\providecommand{\MR}{\relax\ifhmode\unskip\space\fi MR }
\providecommand{\MRhref}[2]{%
  \href{http://www.ams.org/mathscinet-getitem?mr=#1}{#2}
}
\providecommand{\href}[2]{#2}
\begin{thebibliography}{10}

\bibitem{ACLY00}
R.~Ahlswede, N.~Cai, S.-Y.R. Li, and R.W. Yeung, \emph{Network information
  flow}, IEEE Trans. Inform. Theory \textbf{45} (2000), no.~4, 1204--1216.

\bibitem{B18}
F.~Bridoux, \emph{Sequentialization and procedural complexity in automata
  networks}, LNCS, Springer-Verlag, 2018, p.~to appear.

\bibitem{Bu96}
S.~Burckel, \emph{Closed iterative calculus}, Theoret. Comput. Sci.
  \textbf{158} (1996), 371--378.

\bibitem{Bu04}
\bysame, \emph{Elementary decompositions of arbitrary maps over finite sets},
  J. Symbolic Comput. \textbf{37} (2004), no.~3, 305--310.

\bibitem{BGT09}
S.~Burckel, E.~Gioan, and E.~Thom\'e, \emph{Mapping computation with no
  memory}, Proc. International Conference on Unconventional Computation (Ponta
  Delgada, Portugal), 2009, pp.~85--97.

\bibitem{BGT14}
\bysame, \emph{Computation with no memory, and rearrangeable multicast
  networks}, Disc. Math. Theor. Comput. Sci. \textbf{16} (2014), 121--142.

\bibitem{BM00}
S.~Burckel and M.~Morillon, \emph{Three generators for minimal writing-space
  computations}, Theor. Inform. Appl. \textbf{34} (2000), 131--138.

\bibitem{BM04a}
\bysame, \emph{Quadratic sequential computations of boolean mappings}, Theory
  Comput. Sys. \textbf{37} (2004), no.~4, 519--525.

\bibitem{BM04b}
\bysame, \emph{Sequential computation of linear boolean mappings}, Theoret.
  Comput. Sci. \textbf{314} (2004), 287--292.

\bibitem{CFG14b}
P.~J. Cameron, B.~Fairbairn, and M.~Gadouleau, \emph{Computing in matrix groups
  without memory}, Chicago J. Theoret. Comput. Sci. (2014), no.~08, 1--16.

\bibitem{CFG14a}
\bysame, \emph{Computing in permutation groups without memory}, Chicago J.
  Theoret. Comput. Sci. (2014), no.~07, 1--20.

\bibitem{DN05}
P.~D\H{o}m\H{o}si and C.~L. Nehaniv, \emph{Algebraic theory of automata
  networks: An introduction}, SIAM Monographs on Discrete Mathematics and
  Applications, 2005.

\bibitem{E91}
Z.~\'Esik, \emph{A note on isomorphic simulation of automata by networks of
  two-state automata}, Discrete Appl. Math. \textbf{30} (1991), 77--82.

\bibitem{GR15}
M.~Gadouleau and S.~Riis, \emph{Memoryless computation: New results,
  constructions, and extensions}, Theoret. Comput. Sci. \textbf{562} (2015),
  129--145.

\bibitem{AN87}
F.~F. Soulie, Y.~Robert, and M.~Tchuente, \emph{Automata networks in computer
  science: theory and application}, Manchester University Press, 1987.

\bibitem{T79}
M.~Tchuente, \emph{Parallel calculation of a linear mapping on a computer
  network}, Linear Algebra Appl. \textbf{28} (1979), 223--247.

\bibitem{T82}
\bysame, \emph{Parallel realization of permutations over trees}, Discrete Math.
  \textbf{39} (1982), 211--214.

\bibitem{T83}
\bysame, \emph{Computation of boolean functions on networks of binary
  automata}, J. Comput. Systems Sci. \textbf{26} (1983), 269--277.

\bibitem{T85}
\bysame, \emph{Permutation factorization on star-connected networks of binary
  automata}, SIAM J. Algebraic Discrete Methods \textbf{6} (1985), 537--540.

\bibitem{T86}
\bysame, \emph{Computation on binary tree network}, Discrete Appl. Math.
  \textbf{14} (1986), 295--310.

\bibitem{T88}
\bysame, \emph{Computation on finite networks of automata}, Automata Networks
  (New York) (Christian Choffrut, ed.), Lecture Notes in Comput. Sci., vol.
  316, Springer-Verlag, 1988, pp.~53--67.

\end{thebibliography}

\end{document}